\documentclass{article}
\usepackage{kr}

\usepackage{times}
\usepackage{soul}
\usepackage{url}
\usepackage[hidelinks]{hyperref}
\usepackage[utf8]{inputenc}
\usepackage[small]{caption}
\usepackage{graphicx}
\usepackage{amsmath}
\usepackage{amsthm}
\usepackage{booktabs}
\usepackage{algorithm}
\usepackage{algorithmic}
\usepackage{pgfplots}
\usepgfplotslibrary{external}
\pgfplotsset{compat=1.18}
\usepackage{pgfplotstable}

\newcommand{\hideablePlot}[1]{#1}

\usepackage{makecell}
\usepackage{adjustbox}
\usepackage{tabularx}
\usepackage{tikz}
\usepackage[many]{tcolorbox}

\newtheorem{example}{Example}
\newtheorem{theorem}{Theorem}

\newtheorem{lemma}{Lemma}

\usepackage{amssymb}
\newcommand{\anonymize}[2]{#1}
\newcommand{\anonAuth}[1]{\anonymize{#1}{Anonymous Author}}
\newcommand{\anonAff}[1]{\anonymize{#1}{Anonymous Affiliation}}
\newcommand{\anonEmails}[1]{\anonymize{#1}{Anonymous Emails}}

\title{Identifying and Explaining (Non-)Equivalence of First-Order Logic Formulas}

\newcommand{\narrowerAuthors}{\hspace*{-1.5mm}}
\author{\anonAuth{Fabian Vehlken}$^1$\narrowerAuthors\and
  \anonAuth{Thomas Zeume}$^1$\narrowerAuthors\and
  \anonAuth{Emilio Carrasco Bustamante}$ ^2 $\narrowerAuthors\and
  \anonAuth{Ma\"elle Corn\'ely}$^3$\narrowerAuthors\and
  \anonAuth{Lukas Pradel}$^1$\\
\affiliations
$^1$\anonAff{Ruhr University Bochum}\\
$^2$\anonAff{TU Dortmund University}\\
$^3$\anonAff{Universit\'{e} Paris-Saclay, ENS Paris-Saclay}\\
\emails
\anonEmails{\{fabian.vehlken, thomas.zeume\}@rub.de}}

\newif\ifcomments
\newif\ifchanges
\commentsfalse\changesfalse
\commentstrue\changestrue 

\newif\ifwithappendix
\withappendixfalse
\withappendixtrue

\newcommand{\insertDependingOnAppendix}[2]{\ifwithappendix #1 \else #2 \fi}

\usepackage{fvcols}
\usepackage{xspace}
\usepackage[capitalise]{cleveref}
\usepackage{subcaption}
\usepackage{enumitem}

\newcommand{\df}{\ensuremath{\mathrel{\smash{\stackrel{\scriptscriptstyle{
    \text{def}}}{=}}}} \;}

\newcommand  {\myclass} [1]  {\ensuremath{\textsf{\upshape #1}}}

\newcommand{\StaClass}[1]{\myclass{#1}\xspace}

\newcommand  {\algorithmicProblem} [1] {\normalfont{\textsc{#1}}\xspace}

\newcommand{\problemIndent}{\hspace{-4mm}}
\newcommand{\algorithmicProblemDescription}[4][\linewidth]{
    \vspace{1mm}
    \def\Name{#2}
    \def\Input{#3}
    \def\Question{#4}
      \problemIndent
      \setlength{\tabcolsep}{1mm}
      \begin{tabularx}{#1}{rX}\textit{Problem:}&\algorithmicProblem{\Name} \\
     \textit{Input:}&\Input \\
     \textit{Question:}&\Question
     \end{tabularx}\vspace{1mm}
    }

\newcommand{\CQ}[1][]{\StaClass{CQ}}
\newcommand{\UCQ}[1][]{\StaClass{UCQ}}
\newcommand{\CQneg}[1][]{\StaClass{CQ\ensuremath{^{\mneg}}}}
\newcommand{\UCQneg}[1][]{\StaClass{UCQ\ensuremath{^{\mneg}}}}

\theoremstyle{plain}

\theoremstyle{definition}

\newtheorem*{question*}{Question}
\newtheorem*{openquestion*}{Open question}

\providecommand {\calA}      {{\mathcal A}\xspace}
\providecommand {\calB}      {{\mathcal B}\xspace}

\providecommand {\calG}      {{\mathcal G}\xspace}

\providecommand {\calP}      {{\mathcal P}\xspace}

\providecommand {\calS}      {{\mathcal S}\xspace}

\ifcomments
\newcommand{\commentbox}[1]{\noindent\framebox{\parbox{0.98\linewidth}{#1}}}

\newcommand{\acomment}[2]{\ \\ \fbox{\parbox{0.98\linewidth}{{\sc #1}: #2}}}
\newcommand{\mcomment}[2]{{\color{blue}(#1)}\footnote{#1: #2}} \else
\newcommand{\commentbox}[1]{}
\newcommand{\mcomment}[2]{}
\newcommand{\acomment}[2]{}
\fi

\ifchanges

\setul{}{0.2mm}
\setstcolor{red}

\else

\fi

\newcommand{\constraints}{\ensuremath{\Gamma}}
\newcommand{\Mod}{\ensuremath{\mathrm{Mod}}}
\providecommand{\mathcolor}[2]{\begingroup\color{#1}#2\endgroup}
\newcommand{\tpl}[1]{\overline{#1}}

\newcommand{\tableindent}{\hspace{0.5em}}

\newcommand{\fiiill}{&&&&&&&&&&&&&}
\newcommand{\defaultColumnHeaders}{\# & \% & \#dst & \%dst}
\begin{document}

\maketitle

\begin{abstract}
  First-order logic is the basis for many knowledge representation formalisms and methods. Providing technological support for learning to write first-order formulas for natural language specifications requires methods to test formulas for (non-)equivalence and to provide explanations for non-equivalence. We propose such methods based on both theoretical insights and existing tools, implement them, and report on experiments testing their effectiveness on a large educational data set with $> 100{,}000$ pairs of first-order formulas.
\end{abstract}

\section{Introduction}\label{section:introduction}
First-order logic is central for multiple areas of computer science. It is the basis of formal languages for representing knowledge and reasoning, of specification languages for formal verification, and of query languages for databases, among others. In knowledge representation and reasoning it is used in many subdisciplines, including argumentation (e.g. to formalize arguments and their relationships), belief change, logic programming, and reasoning about knowledge and belief. Also, many other knowledge representation formalisms are based on (fragments of) first-order logic, e.g.\ description logics, modal logics, and non-monotonic logics.

Basics of first-order logic -- modelling, normal forms, and reasoning -- are therefore often included in mandatory courses in Bachelor Computer Science programmes; for instance, guidelines of the German Association for Computer Science (GI) recommend to include such basics in computer science study programmes \cite{GI2016}. Figure \ref{figure:fo-modelling-exercise} sketches a typical exercise of such a course which students learn to solve by (1) designing a suitable first-order vocabulary (i.e., which relation, function, and constant symbols to use), (2) describing the scenario by first-order formulas, (3) transforming formulas into a suitable normal form (e.g.\ prenex or Skolem form), and (4) inferring knowledge using an inference mechanism (e.g.\ resolution).

In this paper we continue a line of research aimed at supporting educational tasks (1) -- (4) for learning first-order logic in educational support systems. While basic support for all four tasks is provided by state-of-the-art systems such as the Iltis system \cite{SchmellenkampVZ24,KneiselVZ25}, not much work has been done on providing advanced feedback, even though individual, timely feedback is essential for learning success. In this work we focus on advanced feedback mechanisms for writing first-order formulas, i.e.\ for educational task (2).

Writing first-order formulas is a challenge for students and prone to many errors \cite{SchmellenkampSZ25}. Besides stating that a student attempt is incorrect and providing counter examples, feedback ideally pinpoints conceptual mistakes. For instance, 
when asked to write a first-order formula for the natural language statement
\begin{itemize}
 \item[] ``\textit{System libraries depend only on system libraries.}''
\end{itemize}
with solution
\begin{itemize}
 \item[] \(\forall x \forall y \big((S(x) \land D(x,y)) \to S(y)\big),\)
\end{itemize}
we envision that students receive individual conceptual feedback depending on their attempt, for instance:
\begin{itemize}
\item[(a)] Hints at quantification mistakes:
        \begin{itemize}
         \item Student: $ \forall x \exists y \big((S(x) \land D(x,y)) \to  S(y)\big) $
         \item Feedback: It might help to double check the quantifier structure of your formula.
        \end{itemize}

		 \item[(b)]  Hints at guarding mistakes:
        \begin{itemize}
         \item Student: $ \forall x \forall y (D(x,y) \to S(y)) $
         \item Feedback: It seems that you need to ensure that the variable $x$ ranges only over some domain elements.
        \end{itemize}

    \item[(c)]  Hints at missing parts or symbols of a formula:
        \begin{itemize}
         \item Student: $ \forall x \forall y (S(x) \to S(y)) $
         \item Feedback: The statement includes a dependency between libraries, but your formula does not use the relation symbol $D$ that models such dependencies.
        \end{itemize}
\end{itemize}

Algorithmically providing feedback faces two obstacles.
First, providing feedback for first-order formulas is an inherently hard algorithmic problem. Already determining the correctness of student attempts requires checking logical equivalence of first-order formulas, often modulo a background theory for knowledge implicit in the given scenario. While this algorithmic problem is undecidable in general, one approach is to use a standard reduction to the satisfiability problem for first-order logic and to apply a theorem prover such as Vampire \cite{KovacsV13,BartekBCHHHKRRRSSV25}. A natural question is whether this suffices and also whether small, meaningful counter examples can be computed this way for non-equivalent formulas.

\begin{itemize}
    \item RQ1: Can the correctness of student attempts be determined in practice for first-order logic modelling tasks?
    \item RQ2: In case of an incorrect attempt, can a counter example witnessing incorrectness be provided?
\end{itemize}

Second, no approaches for providing higher level explanations (such as explanations of types (a) -- (c) from above) for explaining why an attempt is wrong or, in other words, why two first-order formulas are non-equivalent, have been explored in prior work. In particular, no algorithmic methods for finding such explanations exist so far.

\begin{itemize}
  \item RQ3: In case of an incorrect attempt, how can high-level explanations be computed in practice?
\end{itemize}

\subsection*{Contributions}

We address research questions RQ1, RQ2, and RQ3 by (1) designing algorithmic methods for explaining non-equivalence of first-order formulas and (2) evaluating equivalence testing, finding counter models, and the methods for explaining non-equivalence from (1) on large educational data sets with $> 100{,}000$ pairs of first-order formulas.

Our explanations for non-equivalence of two formulas $\psi$ and $\varphi$ come in two flavours. A \emph{blocker} is a syntactic property of one of them that prevents equivalence; examples are different numbers of free variables or missing vocabulary symbols. A \emph{bugfixing modification} is a small modification of one of the formulas that makes it equivalent to the other; examples are bugfixes that modify a quantifier, a guarding attempt or small careless mistakes. For computing explanations, we exploit several techniques:
\begin{itemize}
 \item We use consequences of characterizations of Beth definability \cite[Theorem 6.6.4]{Hodges93} to computationally discover vocabulary symbols that must be used for expressing some property.
 \item We use specifically designed profiling information for quantifier patterns of atoms and (attempted) guards to compute candidates for bugfixing modifications, whose correctness is then tested by applying them and testing equivalence to the resulting formula.
 \item We use a na\"{\i}ve Erd\"os-R\'enyi-style random model generator to generate counter examples to supplement built-in finite model building algorithms of Vampire.
\end{itemize}

We evaluate all methods -- equivalence testing, finding counter models, methods for explaining non-equivalence -- on an educational data set collected anonymously with an educational support system in mandatory Bachelor-level courses at several large universities. At the time of data collection, only feedback indicating whether the input was correct, possibly augmented by a counter example, was provided (with the possibility of receiving feedback that the system was not able to determine the correctness of the input).

In total, the data set contains 125,970 pairs of formulas (distinct: 37,159), of which 27,655 pairs are equivalent (distinct: 3,019) and 98,315 pairs are non-equivalent (distinct: 34,140). Each pair consists of a first-order formula modelling a natural language statement and a formula by a student attempting to model the same statement. In total, there are pairs for 62 statements spread over 14 scenarios, 11 of them with a background theory of implicit knowledge.

For an evaluation summary we refer to \cref{table:summary-all-exercises}. We answer research questions RQ1 and RQ2 affirmatively. For all 125,970 pairs of formulas, equivalence is determined by the Vampire theorem prover. For 99.14\% of non-equivalent pairs of formulas a counter example is found either by Vampire's finite model builder (97.16\%, exclusively 19.1\%) or our na\"{\i}ve random model generator (80.04\%, exclusively 1.98\%). Towards answering research question RQ3, our computational strategies for computing explanations yield at least one explanation for 52.15\% of all non-equivalent pairs. Strategies based on missing symbols, quantifiers, and guarding each explain $> 20\%$ of non-equivalent pairs.

These results lay the foundation for conceptual feedback for first-order modelling in educational support systems. They also quantitatively confirm conceptual challenges encountered by students in first-order modelling tasks, including the use of quantifiers and guarding, and thus can be the basis for both fine-grained analysis and intervention design from a CS education research perspective.

\begin{figure}[t]
	\begin{minipage}{0.48\textwidth}
		\begin{tcolorbox}[    tikznode boxed title,
    enhanced,
    arc=0mm,
    interior style={white},
    attach boxed title to top left= {yshift=-\tcboxedtitleheight/2, xshift=3mm},
    fonttitle=\bfseries,
    colbacktitle=white,coltitle=black,
    boxed title style={size=normal,colframe=white,boxrule=0pt},
    title={Exercise: From modelling to inference},
    left=4pt,right=4pt,
    top=0pt, bottom=0pt,
		]
			\vspace{2mm}
\noindent\small
                        Julia is investigating a collection of software packages. Each software package has exactly one maintainer, and software packages may depend on other software packages.
                        She has identified the following dependencies:

			\begin{enumerate}[leftmargin=6mm]
                            \item[(a)] No software package is both a program library and a system library.
                            \item[(b)] System libraries depend only on system libraries.
                            \item[(c)] Program libraries may only depend on system libraries which are not maintained by the same maintainer.
                            \item[(d)] Every software package that must be explicitly installed by the user depends on at least one program library directly.
			\end{enumerate}

                        \vspace{-1.3mm}
			\noindent She concludes that there is no system library that must be explicitly installed by the user.
\noindent Can you confirm her conclusion using methods you learned for first-order logic?
		\end{tcolorbox}
\end{minipage}
 \caption{A first-order logic modelling exercise in which students have to choose a first-order vocabulary, write first-order formulas for natural language specifications in this vocabulary, and infer knowledge with an inference mechanism such as resolution.}\label{figure:fo-modelling-exercise}\end{figure}

\subsection*{Related work}
How to explain student mistakes and how to provide adequate feedback has also been explored for educational modelling tasks in other subfields of formal foundations of computer science including propositional logic \cite{SchmellenkampLZ23}, temporal logics \cite{GreenmanSNK23}, regular languages \cite{DantoniKAGVH2015}, and context-free languages \cite{SchmellenkampZAKSS2025}.

\subsection*{Outline}
After formalizing the setting and recalling methods for determining (non-)equivalence of formulas and finding counter examples in \cref{section:setting}, we discuss methods for explaining non-equivalence in \cref{section:methods}. All methods are evaluated in \cref{section:evaluation}. We discuss impact and future work in \cref{section:summary}.
\insertDependingOnAppendix{Appendix~\ref{section:appendix-exercises} provides further details on the evaluation data set, including a sample exercise, while Appendix~\ref{section:appendix-evaluation-tables} presents detailed evaluation data.}{An extended version of this paper with an appendix containing details on the evaluations and sample exercises is available on arXiv \cite{VehlkenZeumeCBCP2026extd}.} 
\section{Setting, Equivalence, Counter Examples}\label{section:setting}
In educational tasks that ask to write first-order formulas for natural language statements, typically students are given a natural language description of a scenario together with the statement to be modelled. For writing the formula, a first-order vocabulary $\sigma$ with a description of the intended meaning of its symbols is provided by instructors or designed by students in preliminary educational tasks. An instructor specifies a first-order formula $\psi$ over $\sigma$ that correctly models the statement as well as additional first-order constraints $\Gamma$ over $\sigma$ that model constraints implicit in the scenario description. The first-order constraints $\Gamma$ form a \emph{background theory}. We denote its class of models by $\Mod(\Gamma) \df \{\calS \mid \calS \models \Gamma\}$. For checking whether a first-order formula $\varphi$ over $\sigma$ written by a student correctly describes the statement to be modelled, one checks whether $\psi$ and $\varphi$ agree for all $\sigma$-structures $\calS \in \Mod(\Gamma)$.

For determining correctness and certifying incorrectness, the following algorithmic problem needs to be~solved:

\algorithmicProblemDescription{FO-Equivalence-Modulo-Theory}
        {
            First-order formulas $\psi$, $\varphi$ and a set $\Gamma$ of first-order sentences over $\sigma$.
        }
        {
            Are $\psi$ and $\varphi$ equivalent modulo $\Gamma$ (denoted by $\psi \equiv_{\Gamma} \varphi$), that is, does $\calS \models \psi$ iff $\calS \models \varphi$ hold for all $\calS \in \Mod(\Gamma)$? If not, which  $\calS \in \Mod(\Gamma)$ is a counter example?
        }

We explain existing algorithmic approaches for this problem, as they are the basis for more advanced explanations (see \cref{section:methods}) and because their performance on educational data sets is evaluated later on (see \cref{section:evaluation}).

The decision version of the problem above can be easily reduced to first-order (un-)satisfiability as follows:
\begin{align*}
    \psi\equiv_{\Gamma} \varphi &\iff \bigwedge\constraints \to (\psi\leftrightarrow \varphi) \text{ is a tautology}\\
                                          &\iff \bigwedge\constraints \land \lnot (\psi\leftrightarrow \varphi) \text{ is unsatisfiable.}
\end{align*}
Thus, it can be solved by invoking a theorem prover such as Vampire \cite{KovacsV13,BartekBCHHHKRRRSSV25}.

For providing counter examples for non-equivalent formulas, one can use Vampire's finite model building implementation \cite{RSV16}. Since initial experiments indicated that this does not produce counter examples for all pairs judged to be non-equivalent by Vampire, we also use a na\"{\i}ve random model generator approach. We compare both approaches in Section \ref{section:evaluation}.

The na\"{\i}ve random model approach uses an Erd\H{o}s-R\'enyi-style construction (see e.g. \cite{Libkin04}). Given a universe size $ n $ and a probability $ p $, construct a structure $ \mathcal{S} $ with universe $ A = \{1, \dots, n\} $. For each $ k $-ary relation symbol $ R \in \sigma $, each tuple in $ A^k $ will be added to $ R^\mathcal{S} $ independently with probability $ p $. For $ k $-ary function symbols $ f \in \sigma $, the image under $ f^\mathcal{S} $ of each tuple in $ A^k $ is chosen randomly, each element in $ A $ having probability $ \frac{1}{n} $ to be picked.
The universe sizes, probability, and a number of models to be generated for each universe size are parameters.

To find counter examples, one generates random structures in ascending size and checks if they are a model of $ \bigwedge\constraints \land \lnot (\psi\leftrightarrow \varphi) $. If so, it is a counter example certifying \mbox{$ \psi\not \equiv_{\Gamma} \varphi $}. By checking $ \bigwedge\constraints \land \psi \land \lnot \varphi$ and $ \bigwedge\constraints \land \lnot \psi \land \varphi $ separately, additionally feedback as to whether a student attempt $\varphi$ is too restrictive or too permissive can be provided.

Our implementation allows to randomly generate structures satisfying $\Gamma$ in a preprocessing step, and only to subsequently test against $\psi$ and $\varphi$. This is helpful in cases where $\Gamma$ severely restricts permissible models. In such cases, this preprocessing speeds up the counter example search.
 
\section{Explanations for Non-Equivalence}\label{section:methods}

\begin{table*}[t]
    \small
    \centering
     \setlength\tabcolsep{1mm}
     \renewcommand{\arraystretch}{1.2}
    \begin{tabular}{l||c|c|p{4.5cm}}
        \toprule
        \textbf{Strategy} & \multicolumn{2}{c|}{\textbf{Example}} & {\hspace*{4mm}}\textbf{Explanation} \\
        & Solution formula $\psi$  & Student attempt $\varphi$ & \\
        \midrule
        \multicolumn{4}{c}{\textbf{Strategies for Symbols}}\\
        \midrule
        S-1 Missing symbol names & $\forall x\, (Q(x) \rightarrow P(x))$ & $\forall x\, P(x)$ & Relation symbol $ Q $ does not occur in $\varphi$, but is required \\
        S-2 Permuted arguments of a symbol & $ \forall x \exists y\, R(\mathcolor{blue}{x},\mathcolor{blue}{y}) $ & $ \forall x \exists y\, R(\mathcolor{red}{y},\mathcolor{red}{x}) $ & Wrong quantification pattern due to permutation \\
        S-3 Different relation symbol names & $ \exists x\, \mathcolor{blue}{P}(x) $ & $ \exists x\, \mathcolor{red}{Q}(x) $ & Wrong relation symbol used\\
        S-4 Different terms & $ \exists x \, P(\mathcolor{blue}{f(}x\mathcolor{blue}{)}) $ & $ \exists x\, P(x) $ & Terms in $P(\cdot)$ differ\\
        \midrule
        \multicolumn{4}{c}{\textbf{Strategies for Quantifiers}}\\
        \midrule
        Q-1 Different quantifier prefixes &  $ \forall x \mathcolor{blue}{\exists y} (P(x) \to G(x,y)) $ & $ \forall x \mathcolor{red}{\forall y} (P(x) \to G(x,y)) $ & Wrong quantifier prefix \\
        Q-2 Different quantification patterns & $ \forall x (S(x) \to \exists y \mathcolor{blue}{\forall z}\, R(x,y,z)) $ & $ \forall x \exists y \mathcolor{red}{\exists z} (S(x) \to R(x,y,z)) $ & Wrong quantification pattern for $ R(x,y,z) $\\
        Q-3 Different free variables & $\forall x\, P(x)$ & $ P(x)$ & $ x $ is free only in $\varphi$ \\
        \midrule
        \multicolumn{4}{c}{\textbf{Strategies for Guarding}}\\
        \midrule
        G-1 Different guards & $\forall x\, (\mathcolor{blue}{P(x) \rightarrow} Q(x))$ & $\forall x\, Q(x)$ & Missing universal guard for $x$ \\
        & $\exists x\, Q(x)$ & $\exists x\, (\mathcolor{red}{P(x) \land} Q(x))$ & Superfluous existential guard for $x$ \\
        G-2 Wrong guarding operators & $ \forall x\, (P(x) \mathcolor{blue}{\to} Q(x)) $ & $ \forall x\, (P(x) \mathcolor{red}{\land} Q(x)) $ & $\wedge$ is the wrong guard operator for universally quantified $x$ \\
        \midrule
        \multicolumn{4}{c}{\textbf{Strategies for Boolean Operators$^*$}}\\
        \midrule
        B-1 Different negation patterns &  $ \forall x\, P(x) $ & $ \forall x\, \mathcolor{red}{\lnot} P(x) $ & Wrong negation prefix for $P(x)$\\
        B-2 Swapping Implication &$ \forall x\, (\mathcolor{blue}{P(x)} \rightarrow \mathcolor{blue}{Q(x)}) $ & $ \forall x\, (\mathcolor{red}{Q(x)} \rightarrow \mathcolor{red}{P(x)}) $ & Implication in wrong direction\\
    \end{tabular}
      \caption{Overview of strategies for explaining non-equivalence of first-order formulas. Only most successful strategies for Boolean operators are listed ($^*$).}
\label{table:strategy-overview}
\end{table*}

Our conceptual explanations for non-equivalence are motivated from our educational application scenario. A recent interview study on errors and misconceptions in first-order logic modelling \cite{SchmellenkampSZ25} confirmed the common perception among instructors that when writing first-order formulas, students struggle with the use of (i) quantifiers, (ii) guarding, and (iii) free variables. Additionally, students often (iv) forget to model parts of statements, or (v) do careless syntactical mistakes including wrong use of Boolean operators. Our goal is to explain non-equivalences of first-order formulas that occur because of mistakes due to (i) -- (v).

In this section we always assume non-equivalence of the two first-order formulas $\psi$ and $\varphi$. Our explanations for non-equivalence are \emph{directional} in the sense that they try to explain why $\varphi$ is not equivalent to $\psi$ from $\varphi$'s perspective. This allows to derive appropriate feedback for student attempts from computed explanations.

Roughly speaking, our explanations come in two flavours: \emph{blockers} and \emph{bugfixing modifications}.   A \emph{blocker} is a syntactic property of $\varphi$ that no formula equivalent to $\psi$ (modulo $ \constraints $) can have. An example is that $\psi$ uses a relation symbol $R$ and no equivalent formula without $R$ exists; thus if $\varphi$ does not use $R$, this explains non-equivalence of $\varphi$ and $\psi$. A \emph{bugfixing modification} is a small modification of $\varphi$ such that applying it to $\varphi$ leads to a formula equivalent to $\psi$.

We next describe algorithmic strategies to discover explanations for non-equivalence. A challenge is efficiency: for finding bugfixing explanations, our algorithms enumerate promising candidate modifications for $\varphi$ and test whether they transform $\varphi$ into $\varphi'$  such that $\varphi' \equiv_{\Gamma} \psi$. Thus, each candidate requires a computationally expensive equivalence test (e.g.\ via Vampire). Our strategies are designed to minimize equivalence tests and to find good candidates.

\cref{table:strategy-overview} provides an overview of our computational strategies as well as examples for their application.

\subsection{Preprocessing}
In a preprocessing step, we compute information that profiles (i) use of quantifier in $\varphi$ and $\psi$, (ii) use of guards in $\varphi$ and $\psi$, and (iii) necessary vocabulary symbols in $\psi$.

\subsubsection{Profiling the Use of Quantifiers.}
We precompute quantifier profiles of atoms occurring in $\psi$ and $\varphi$. 

A first-order formula is in \emph{prenex normal form}, if it is of the form $Q_1 x_1 \ldots Q_k x_k\, \mu$ for quantifiers $Q_i\in \{\exists, \forall\}$ and quantifier-free formula $\mu$. We call $Q_1 x_1 \ldots Q_k x_k$ \emph{quantifier prefix} and $Q_1Q_2\ldots Q_k$ \emph{quantifier prefix type}. Denote by $\varphi^\neg$ the \emph{negation normal form} obtained from $\varphi$ by ``moving all negations inwards'' using De Morgan's law.

The \emph{quantifier prefix of an atom} $\calA \df R(x_1, \ldots, x_k)$ in formula $\varphi$ captures how variables in the atom are quantified and in which order. Formally, it is the sequence $Q_1 y_1 \ldots Q_\ell y_\ell$ where $Q_i$ are quantifiers and $y_i$ variables with
\begin{itemize}
 \item $\{y_1, \ldots, y_\ell\} \cup \{z_1, \ldots, z_m\} = \{x_1, \ldots, x_k\}$, where $z_i$ are free variables of $\varphi$, and
 \item $y_1, \ldots, y_\ell$ are bound by $Q_1 y_1, \ldots, Q_\ell y_\ell$ in this order in the syntax tree of $\varphi^\neg$ (if $y_i$ is quantified multiple times, the sequence contains the quantification closest to $\calA$).
\end{itemize}
The \emph{quantifier prefix type} of $\calA$ captures how and in which order positions of $\calA$ are quantified. Formally, it is the sequence $Q_1 P_1, \ldots, Q_\ell P_\ell$ where $P_i \df \{j \mid x_j = y_i\}$.
The address of an atom $\calA$ in a formula $\varphi$ is a path $\pi_\calA$ from the root of the syntax tree of $\varphi$ to $\calA$. The \emph{profile} $\text{profile}(\varphi, \calA, \pi_\calA)$ of an atom $\calA \df R(x_1, \ldots, x_k)$ with address $\pi_\calA$ in $\varphi$ is the tuple $(R, v, q)$ where the valence $v$ is $v = +$ if no negation is in front of $\calA$ in $\varphi^\neg$ and $v = -$ otherwise, and $q$ is the quantifier prefix type of $\calA$ in $\varphi$. The \emph{profile} $\text{profile}(\varphi)$ of a formula $\varphi$ is the set $\{\text{profile}(\varphi, \calA, \pi_\calA) \mid \text{$\calA$ is an atom with address $\pi_\calA$ in $\varphi$}\}$.

\begin{example}
 Consider the first-order formula
 \[\varphi \df \exists y \big(S(y) \wedge \forall x \neg \forall y (T(x,y,x) \vee S(y))\big).\] The quantifier prefix of the atom $T(x,y,x)$ is $\forall x \exists y$, and its quantifier prefix type is $\forall\{1,3\} \exists\{2\}$. The profile of $\varphi$ is
 \[\{(S, +, \exists\{1\}), (S, -, \exists\{1\}), (T, -, \forall\{1,3\} \exists\{2\})\}.\]
\end{example}

For the sake of readability, we introduced a simplified notion of profiles that only talks about variables. In our implementation, profiles contain information about terms occurring in atoms and about the quantifier structure of variables occurring in those terms.

In a preprocessing step for $\varphi$ and $\psi$, our procedure computes $\text{profile}(\psi)$, $\text{profile}(\varphi)$, and $\text{prefix}(\psi)$ and $\text{prefix}(\varphi)$ (if $\varphi$, $\psi$ are in prenex form).

\subsubsection{Profiling the Use of Guards.}

The use of a variable $z$ in an atom $\calA(\tpl u)$ within a formula $\psi$  is \emph{guarded} by an atom $\calG(\tpl v)$, if $z = y_\ell$ in the quantifier prefix $Q_1 y_1 \ldots Q_k y_\ell$ of $\calG(\tpl u)$ and there is a subformula of the formula $\psi$ of the form

\begin{itemize}
 \item $\calG \rightarrow\rho(\tpl w)$ and $\calA(\tpl u)$ occurs in $\rho(\tpl w)$ if $Q_\ell = \forall$, and
 \item $\calG \wedge \rho(\tpl w)$ and $\calA(\tpl u)$ occurs in $\rho(\tpl w)$ if $Q_\ell = \exists$.
\end{itemize}
It is \emph{wrongly guarded} if the conditions are replaced by
\begin{itemize}
 \item $\calG \wedge \rho(\tpl w)$ and $\calA(\tpl u)$ occurs in $\rho(\tpl w)$ if $Q_\ell = \forall$, and
 \item $\calG \rightarrow \rho(\tpl w)$ and $\calA(\tpl u)$ occurs in $\rho(\tpl w)$ if $Q_\ell = \exists$.
\end{itemize}
As an example, $x$ in the atom $R(x,y,z)$ within $\forall y \exists x(S(x) \wedge \exists z R(x,y,z))$ is guarded by $S(x)$, while it is wrongly guarded within  $\forall y \exists x(S(x) \rightarrow \exists z R(x,y,z))$.

Let $\text{guards}(\psi)$ be the set that contains all pairs $(\textsc{profile}(\calG), \textsc{profile}(\calA))$ of profiles of atoms $\calG$ and $\calA$ such that there is a variable $z$ in $\calA$ guarded by $\calG$ within $\psi$. Similarly, let $\text{wrong-guards}(\psi)$ contain all such pairs where $z$ is wrongly guarded.

In a preprocessing step for $\varphi$ and $\psi$, our procedure computes $\text{guards}(\psi)$, $\text{guards}(\varphi)$, $\text{wrong-guards}(\psi)$, and $\text{wrong-guards}(\varphi)$.

\subsubsection{Profiling Necessary Symbols.}
If $\psi$ and $\varphi$ are non-equivalent and $\varphi$ does not use a symbol $R$ used by $\psi$, one possible explanation for the non-equivalence is that $R$ is necessary for expressing the property expressed by $\psi$.

Formally, let $\psi$ be a first-order formula over $\sigma$ that expresses a property $\calP$ modulo a background theory $\Gamma$. A symbol $R \in \sigma$ is \emph{necessary} to express $\calP$ modulo $\Gamma$, if there is no $\varphi$ over $\sigma \setminus \{R\}$ that is equivalent to $\psi$.

Whether a symbol is necessary can be inferred using the projective Beth definability theorem, see for instance \cite{tenCateC25}. The following theorem rephrases the formulation from \cite[Theorem 6.6.4]{Hodges93}:

\begin{theorem}\label{thm:missingSymbol}
    Let $ \sigma, \sigma' $ be vocabularies with $ \sigma \subseteq \sigma'$, let $\psi(\bar{x})$ be a first-order formula over $\sigma'$, 
and let $ \Gamma $ be a set of first-order sentences over $ \sigma' $ such that $ \Mod(\Gamma) \neq \emptyset $.
    Then the following are equivalent:
    \begin{enumerate}
        \item[(i)] There exist two $ \sigma' $-structures $ \mathcal{A}, \mathcal{B} \models \Gamma $ such that
            \begin{itemize}
                \item $ \mathcal{A}|\sigma = \mathcal{B}|\sigma $ and
\item for some tuple $ \bar{a} $, $ \mathcal{A} \models \psi(\bar{a}) \iff \mathcal{B} \not\models \psi(\bar{a}) $.
            \end{itemize}\item[(ii)] There exists no first-order formula $ \varphi(\bar{x}) $ over $\sigma$ such that $ \Gamma \models \forall \bar{x} (\psi(\bar{x}) \leftrightarrow \varphi(\bar{x})) $.
\end{enumerate}\end{theorem}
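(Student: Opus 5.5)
The plan is to prove the two implications separately. The direction (i)$\Rightarrow$(ii) is elementary. The direction (ii)$\Rightarrow$(i) will be obtained by contraposition from the Craig interpolation theorem, in the style of the standard proof of Beth's definability theorem. This is essentially the route in \cite[Theorem 6.6.4]{Hodges93}.

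\emph{(i)$\Rightarrow$(ii).} Suppose, for contradiction, that some $\varphi(\bar x)$ over $\sigma$ satisfies $\Gamma\models\forall\bar x(\psi\leftrightarrow\varphi)$. Take $\mathcal A,\mathcal B$ and $\bar a$ as in (i). Since $\varphi$ only mentions symbols of $\sigma$ and $\mathcal A|\sigma=\mathcal B|\sigma$, we get $\mathcal A\models\varphi(\bar a)$ iff $\mathcal B\models\varphi(\bar a)$. Because both structures are models of $\Gamma$, this gives $\mathcal A\models\psi(\bar a)$ iff $\mathcal B\models\psi(\bar a)$, which contradicts (i).

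\emph{(ii)$\Rightarrow$(i).} Assume (i) fails; we show that a $\sigma$-definition of $\psi$ exists. For every symbol $S\in\sigma'\setminus\sigma$, introduce a fresh copy $S'$. For a formula $\chi$ over $\sigma'$, let $\chi'$ be the result of replacing each such $S$ by $S'$. Add fresh constants $\bar c$ for $\bar x$. Failure of (i) then says exactly that
\[\Gamma\cup\Gamma'\models\psi(\bar c)\to\psi'(\bar c).\]
To see this, take any model $\mathcal M$ of $\Gamma\cup\Gamma'$ over $\sigma'\cup\{S'\}\cup\{\bar c\}$. It decodes into two $\sigma'$-structures $\mathcal A$ and $\mathcal B$ that are models of $\Gamma$, agree on $\sigma$, and share the same universe and the same tuple $\bar a=\bar c^{\mathcal M}$. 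Conversely, any two such structures can be packed into one model of this form. By compactness, finitely many sentences suffice, so we may replace $\Gamma$ by a single conjunction $\gamma$. This gives
\[\models\bigl(\gamma\wedge\psi(\bar c)\bigr)\to\bigl(\gamma'\to\psi'(\bar c)\bigr).\]
The left side uses symbols of $\sigma'\cup\{\bar c\}$. The right side uses only $\sigma$, the primed symbols and $\bar c$, so the shared vocabulary is $\sigma\cup\{\bar c\}$. Craig interpolation yields a sentence $\theta(\bar c)$ over $\sigma\cup\{\bar c\}$ with $\gamma\models\psi(\bar c)\to\theta(\bar c)$ and $\gamma'\models\theta(\bar c)\to\psi'(\bar c)$. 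Undoing the renaming in the second statement, which is harmless since $\theta$ contains no primed symbols, gives $\gamma\models\theta(\bar c)\to\psi(\bar c)$. Since the constants $\bar c$ are fresh, we can generalize and obtain $\Gamma\models\forall\bar x(\psi\leftrightarrow\theta(\bar x))$, which contradicts (ii).

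The main obstacle is the bookkeeping in the encoding step. One point is that structures $\mathcal A,\mathcal B$ with $\mathcal A|\sigma=\mathcal B|\sigma$ automatically share a universe, so they really do fit into a single structure. Another is handling the degenerate cases: if $\sigma$ contains no symbols, the interpolant may be $\top$ or $\bot$, which is still acceptable. The hypothesis $\Mod(\Gamma)\neq\emptyset$ is needed only to keep (i) and (ii) from being vacuously mismatched. Everything else reduces to the standard Craig interpolation theorem, which I would cite rather than prove.
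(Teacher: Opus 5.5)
Your proof is correct. It is the standard copy-and-interpolate proof of Beth definability, i.e.\ the argument behind \cite[Theorem 6.6.4]{Hodges93}. The paper gives no proof of its own and takes the theorem directly from Hodges, so your route is the one it relies on.

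Only your closing aside is off. The hypothesis $\Mod(\Gamma)\neq\emptyset$ is redundant, and your argument never uses it: for unsatisfiable $\Gamma$, every $\sigma$-formula $\varphi(\bar x)$ satisfies $\Gamma\models\forall\bar x(\psi\leftrightarrow\varphi)$, so (i) and (ii) both fail and the equivalence still holds.
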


\cref{thm:missingSymbol} immediately helps in our application.
For determining whether a symbol \(R \in \sigma'\) is necessary for expressing the property defined by a solution formula \(\psi\) modulo a background theory \(\Gamma\), we apply \cref{thm:missingSymbol} with \(\sigma = \sigma' \setminus \{R\}\).
To this end, we employ Padoa's method (see e.g.\ \cite{tenCateC25}) to test whether structures \(\mathcal{A}\) and \(\mathcal{B}\) and a tuple \(\bar{a}\) as in part (i) of the theorem exist.
If so, the symbol \(R\) is necessary in any student attempt formula \(\varphi\) equivalent to \(\psi\) modulo \(\Gamma\).

We reduce the testing of part (i) of \cref{thm:missingSymbol}, i.e.\ whether structures $\calA$ and $\calB$ exist, to the satisfiability problem for first-order logic and apply an automated theorem prover like Vampire. For finite $ \Gamma $, as in our application, such structures exist if and only if the following existential second-order (ESO) formula $ \chi $ is satisfiable:

\newcommand{\symbolReplacement}[1]{\ensuremath{[\tpl T/ \tpl T#1,\; \tpl g/ \tpl g#1]}}
\vspace{1mm}
    $\exists \tpl R \, \exists \tpl f \exists \tpl T' \exists \tpl g' \exists \tpl T'' \exists \tpl g''  \exists \bar{x}$\\ \vspace*{1mm}
    \hspace*{1cm}$\Big(\big(\psi_{\symbolReplacement{'}}(\bar{x}) \leftrightarrow \lnot \psi_{\symbolReplacement{''}}(\bar{x})\big)$\\ \vspace{1mm}
    \hspace*{1.8cm}$\land~\bigwedge\Gamma_{\symbolReplacement{'}} \land \bigwedge\Gamma_{\symbolReplacement{''}} \Big)$

where
\begin{itemize}
    \item $\exists \tpl R \, \exists \tpl f$ quantifies a structure over $\sigma$,
    \item $\exists \tpl T' \exists \tpl g'$ and  $\exists \tpl T'' \exists \tpl g''$ quantify structures over $\sigma' \setminus \sigma$, and
        \item $ \psi_{[S/S']} $ denotes the formula obtained from $ \psi $ by replacing every occurrence of $ S $ by $ S' $ (analogously for $ \Gamma $).
\end{itemize}

The ESO formula $ \chi $ is satisfiable if and only if the following first-order logic formula is satisfiable:

    $\exists \bar{x}\,\Big(\big(\psi_{\symbolReplacement{'}}(\bar{x}) \leftrightarrow \lnot\psi_{\symbolReplacement{''}}(\bar{x})\big)$\\ \vspace{1mm}
    \hphantom{$\exists \bar{x}\,$}\hspace*{1cm}$\land~\bigwedge\Gamma_{\symbolReplacement{'}} \land \bigwedge\Gamma_{\symbolReplacement{''}} \Big)$

We note that in \cite{Hodges93}, all first-order vocabularies contain equality. Therefore, in \cref{thm:missingSymbol}, if $ \sigma' $ contains equality, then $ \sigma $ also contains equality.
However, in our application, we also want to determine whether the equality symbol is necessary to express the property expressed by $\psi$ on $\Mod(\Gamma)$. To accommodate this, we rephrase the question of whether equality is necessary as a question of whether a congruence relation, i.e.\ an equivalence relation compatible with all other symbols, is necessary. This can be tested using \cref{thm:missingSymbol}.
We rely on the fact that factoring a structure with a congruence by that congruence collapses this relation into the equality relation.

We reuse notation from \cite[Exercise 11.6.11]{EbbinghausFT1994} and state the following lemma. For a first-order formula $\psi$ (with equality) over $\sigma$, we denote by $\psi^*$ the first-order formula over $\sigma \cup \{E\}$, for a fresh binary relation symbol $E$, obtained by replacing all atoms $t_1 = t_2$ in $\psi$ by $E(t_1, t_2)$, where $t_1, t_2$ are terms. Furthermore, let $\delta_E$ be a first-order formula axiomatizing that $E$ is a congruence with respect to all symbols in $\sigma$.  For a set $\Gamma$ of first-order sentences we denote $\Gamma^* \df \{\varphi^* \mid \varphi \in \Gamma\}$.

\begin{lemma}\label{lemma:equality-to-equivalence}
    Let $ \psi $ be a first-order formula and $ \Gamma $ a finite set of first-order sentences over a vocabulary $ \sigma $ (with equality).
    Then the following are equivalent:
  \begin{itemize}
   \item[(i)] There exists a formula $ \varphi $ over $ \sigma $ (without equality) such that $\psi \equiv_{\Gamma} \varphi$.
   \item[(ii)] There exists a formula $ \mu $ over $ \sigma $ (without $ E $) such that $\psi^* \equiv_{\Gamma^* \cup \{\delta_E\}} \mu $.
  \end{itemize}
\end{lemma}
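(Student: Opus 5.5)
The plan is to prove both directions by moving between $\sigma\cup\{E\}$-structures in which $E$ is a congruence and their quotients, which are $\sigma$-structures. Throughout I read ``over $\sigma$ (without $E$)'' in (ii) as meaning that $\mu$ uses neither $E$ nor equality. Otherwise (ii)$\Rightarrow$(i) could not produce an equality-free formula, and the lemma would not serve its purpose of testing whether equality is necessary. Two standard auxiliary facts, in the spirit of \cite[Exercise 11.6.11]{EbbinghausFT1994}, carry the argument. Let $\calS$ be a $\sigma\cup\{E\}$-structure with $\calS\models\delta_E$, let $\calS/E$ be its quotient $\sigma$-structure, and write $[\tpl a]$ for the tuple of classes of $\tpl a$.
\begin{itemize}
\item[(a)] For every formula $\chi$ over $\sigma$ with equality, $\calS\models\chi^*(\tpl a)$ iff $\calS/E\models\chi([\tpl a])$.
\item[(b)] For every equality-free formula $\chi$ over $\sigma$, $\calS\models\chi(\tpl a)$ iff $\calS/E\models\chi([\tpl a])$.
\end{itemize}

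Both facts are proved by a single induction on $\chi$. The first step is the term case: $t^{\calS/E}([\tpl a])=[t^{\calS}(\tpl a)]$, which holds because $E$ is compatible with every function symbol. Next come the atoms. Relation atoms are handled by compatibility of $E$ with the relation symbols, which makes $R^{\calS/E}$ well defined. Equality atoms, needed for (a) only, are exactly where $E(t_1,t_2)$ in $\calS$ turns into $[t_1]=[t_2]$ in $\calS/E$. The Boolean cases are immediate. In the quantifier case, the projection $a\mapsto[a]$ is surjective, so ``there is $b$ in $\calS$'' and ``there is a class $[b]$ in $\calS/E$'' match. I expect the only real care to lie in this induction, specifically the atom case for (a) and the well-definedness of the quotient. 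It is routine but it is the substance of the lemma.

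For (i)$\Rightarrow$(ii), I take $\mu\df\varphi$, which is equality-free and does not use $E$. Let $\calS\models\Gamma^*\cup\{\delta_E\}$. By (a), $\calS/E\models\Gamma$. For any tuple $\tpl a$ I then chain the equivalences
\[\calS\models\psi^*(\tpl a)\iff\calS/E\models\psi([\tpl a])\iff\calS/E\models\varphi([\tpl a])\iff\calS\models\varphi(\tpl a).\]
The first step uses (a), the second uses $\psi\equiv_\Gamma\varphi$, and the third uses (b).

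For (ii)$\Rightarrow$(i), I take $\varphi\df\mu$, and here no quotient is needed. Let $\calA\models\Gamma$ and expand it to $\calA^=$ by interpreting $E$ as true equality. This is trivially a congruence, so $\calA^=\models\delta_E$. Since $E$ and $=$ coincide in $\calA^=$, we have $\calA^=\models\chi^*(\tpl a)$ iff $\calA\models\chi(\tpl a)$ for every $\chi$. In particular $\calA^=\models\Gamma^*$. Then $\calA\models\psi(\tpl a)$ iff $\calA^=\models\psi^*(\tpl a)$ iff $\calA^=\models\mu(\tpl a)$ iff $\calA\models\mu(\tpl a)$. The last step holds because $\mu$ does not mention $E$. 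This gives $\psi\equiv_\Gamma\mu$ with $\mu$ equality-free, as required.
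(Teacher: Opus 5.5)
Your argument is sound for the statement you actually prove, but that statement is not the lemma. By reading (ii) as forbidding equality in $\mu$, you remove exactly its nontrivial direction. In the paper, (ii) excludes only $E$. There $\mu$ is a formula over $\sigma$ in Hodges' sense, so it may use $=$; contrast (i), which says ``without equality'' explicitly.

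This is also the version the application needs. Padoa's test via Theorem~\ref{thm:missingSymbol} for $E$ in $\psi^*$ modulo $\Gamma^*\cup\{\delta_E\}$ decides whether some $E$-free $\mu$ \emph{possibly using equality} is equivalent to $\psi^*$. So under your reading, the reduction ``equality is needed in $\psi$ iff $E$ is necessary in $\psi^*$'' no longer follows.

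Under the intended reading, your proof of (ii)$\Rightarrow$(i) fails at its last sentence. Expanding $\calA$ by interpreting $E$ as identity correctly gives $\psi\equiv_\Gamma\mu$, but $\mu$ may contain $=$ and so does not witness (i). Your stated reason for the reinterpretation, that otherwise no equality-free formula could be produced, is also wrong.

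The missing ingredient, which the paper extracts from the proof of projective Beth definability, is Craig interpolation for first-order logic without equality. If $\alpha\models\beta$ with $\alpha,\beta$ equality-free, then there is an equality-free interpolant over their common symbols. Concretely, let $E'$ be fresh. In every model of $\Gamma^*\cup\{\delta_E\}\cup\Gamma^*_{[E/E']}\cup\{(\delta_E)_{[E/E']}\}$, both $\psi^*(\tpl x)$ and $\psi^*_{[E/E']}(\tpl x)$ are equivalent to $\mu(\tpl x)$, which mentions neither $E$ nor $E'$. Hence, using finiteness of $\Gamma$, $\alpha\df\bigwedge\Gamma^*\wedge\delta_E\wedge\psi^*(\tpl x)$ entails $\beta\df\big(\bigwedge\Gamma^*_{[E/E']}\wedge(\delta_E)_{[E/E']}\big)\to\psi^*_{[E/E']}(\tpl x)$. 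Both formulas are equality-free, so there is an interpolant $\theta(\tpl x)$ that uses none of $E$, $E'$, or $=$. Renaming $E'$ back to $E$ yields $\psi^*\equiv_{\Gamma^*\cup\{\delta_E\}}\theta$. Now $\theta$ satisfies your strong form of (ii), and your expansion argument gives $\psi\equiv_\Gamma\theta$.

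Your facts (a) and (b), your (i)$\Rightarrow$(ii), and the expansion step are fine. The expansion step is a tidy replacement for the paper's appeal to the general equivalence $\psi\equiv_\Gamma\varphi\iff\psi^*\equiv_{\Gamma^*\cup\{\delta_E\}}\varphi^*$.
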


Thus, whether equality is needed in $\psi$ reduces to testing whether the symbol $E$ is necessary in $\psi^*$. 

\begin{proof}[Proof (of Lemma \ref{lemma:equality-to-equivalence})]

   We use the following fact, see \cite{EbbinghausFT1994}. For all formulas $\psi$, all models $\calS$, all binary relations $E^\calS$ with $(\calS, E^\calS) \models \delta_E$ it holds that $(\calS, E^\calS) \models \psi^*$ iff $\calS / E^\calS \models \psi$, where $\calS / E^\calS$ denotes the factor structure of $\calS$ by $E^\calS$. 

   Using this fact, one can easily show that $\psi \equiv_{\Gamma} \varphi$ iff $\psi^* \equiv_{\Gamma^* \cup \{\delta_E\}} \varphi^*$ for any two formulas $\psi$, $\varphi$ and set $\Gamma$ of sentences.
   Direction (i) $ \implies $ (ii) follows, since $ \varphi $ does not use equality by assumption, and therefore $ \varphi^* $ does not use~$ E $. Thus, $ \varphi^* = \varphi $ is a valid choice for $ \mu $.

   For (ii) $ \implies $ (i), 
   we observe that in the proof of the projective Beth definability theorem in \cite{tenCateC25}, the constructed formula that explicitly defines the symbol in question (in this case, $ E $) in terms of $\sigma$ does not use equality, if the background theory does not use equality, which $ \Gamma^* \cup \{\delta_E\} $ does not. Therefore, if $ \mu $ as in (ii) exists, there also exists a formula $ \mu' $ equivalent to $ \mu $ modulo $ \Gamma^* \cup \{\delta_E\} $ that uses neither equality nor $ E $. Thus, $ \mu' $ is a valid choice for $ \varphi $ in (i).

    The lemma statement follows.
\end{proof}

In a preprocessing step for $\psi$, all necessary symbols $\textsc{necessary}(\psi)$ of $\psi$ are computed relying on the described procedure. Note that this requires a first-order satisfiability test for each symbol and is therefore expensive, but this only has to be done once for each solution formula $\psi$.

\subsection{Strategies for Explaining Non-equivalence}
 We now explain computational strategies for explaining non-equivalence; most using the precomputed information.

\subsubsection{Strategies for Symbols.} The following strategies identify missing symbols and
candidates for fixing non-equivalence due to permuted arguments or wrong names of symbols.
\begin{description}
\item[S-1 Missing symbol names] If a vocabulary symbol (relation, function, or constant) is necessary in $\psi$ but does not occur in $\varphi$, then this is a blocker explaining non-equivalence.

\item[S-2 Permuted arguments of a symbol] If only $\varphi$ has an atom $\calA_\varphi$ with profile $(R, v, q_\varphi)$ and only $\psi$ has an atom with profile $(R, v, q_\psi)$, and permuting arguments of $\calA_\varphi$ changes $q_\varphi$ into $q_\psi$, then modify $\varphi$ by permuting the arguments. If successful (i.e., if this leads to equivalence to $\psi$), then this modification explains non-equivalence.

\item[S-3 Different relation symbol names] If only $\varphi$ has an atom $\calA_\varphi$ with profile $(R_\varphi, v, q)$ and only $\psi$ has an atom  with profile $(R_\psi, v, q)$, then modify $\varphi$ by replacing the symbol $R_\varphi$ in $\calA_\varphi$ by $R_\psi$.  If successful (i.e., if this leads to equivalence to $\psi$), then this modification explains non-equivalence.

\item[S-4 Different terms] Similarly to the previous case but for terms (relying on extended profiles that incorporate information for terms).
\end{description}

\subsubsection{Strategies for Quantifiers.}
The following strategies identify different quantification patterns and different sets of free variables.
\begin{description}
 \item[Q-1 \textbf{Different quantifier prefixes}] If both $ \psi $ and $ \varphi $ are in prenex normal form and their quantifier prefixes differ, then modify $\varphi$ by replacing its prefix by the prefix of $\psi$ if this does not introduce additional free variables. If successful (i.e., if this leads to equivalence to $\psi$), then this modification explains non-equivalence.
 \item[Q-2 \textbf{Different quantification patterns}] If only $\varphi$ has an atom $\calA_\varphi$ with profile $(R, v, q_\varphi)$ and only $\psi$ has an atom with profile $(R, v, q_\psi)$, then modify $\varphi$ by replacing\footnotemark[1] the quantifiers corresponding to $q_\varphi$ in $\varphi$ by quantifiers corresponding to $q_\psi$ (with suitable renaming of variables). If successful (i.e., if this leads to equivalence to $\psi$), then this modification explains non-equivalence.
 \item[Q-3 \textbf{Different free variables}] If the set of free variables of $\psi$ and $\varphi$ differs,  then this is a blocker explaining non-equivalence.
\end{description}

\subsubsection{Strategies for Guarding.}
The following strategies identify missing and superfluous guards in $\varphi$ as well as wrong guards.
\begin{description}
 \item[G-1 Different guards] If $\varphi$ contains an atom $\calA_\varphi$ with an unguarded argument position $i$ and $\psi$ contains an atom $\calA_\psi$ with the same profile but position $i$ guarded by a guard $\calG_\psi$, then modify $\varphi$ by adding a guard $\calG_\varphi$ for position $i$ of $\calA_\varphi$ (if possible). Symmetrically for $\psi$ and $\varphi$ (the modification then removes guards from $ \varphi $). If successful (i.e., if this leads to equivalence to $\psi$), then this modification explains non-equivalence.
\item [G-2 Wrong guarding operators] If $\varphi$ contains a wrongly guarded atom, then modify $\varphi$ by replacing $\rightarrow$ by $\wedge$ in the guard or vice versa.

\end{description}

\subsubsection{Strategies for Boolean Operators.}
Strategies for discovering explanations for non-equivalence for Boolean operators based on modifications have been studied extensively in \cite{SchmellenkampLZ23} and can be re-used for first-order logic. We only present a strategy for negations that is specific for first-order logic, and one exemplary Boolean strategy concerning implication (chosen because it explains many non-equivalences in our data set).

\begin{description}
 \item[B-1 \textbf{Different negation patterns}] If only $\varphi$ has an atom $\calA_\varphi$ with profile  $(R, v_\varphi, q)$  and only $\psi$ has an atom with profile $(R, v_\psi, q)$ where $v_\varphi = +$ iff $v_\psi = -$ (i.e., $R$ is positive in exactly one of the two formulas) then modify\footnotemark[1] negations on the path of the atom with profile $(R, v_\varphi, q)$ in $\varphi$ according to the corresponding path of the atom  with profile $(R, v_\psi, q)$ in $\psi$. If successful (i.e. if this leads to equivalence to $\psi$), then this modification explains non-equivalence.
\item [B-2 \textbf{Swapping Implication}] Modify $ \varphi $ by swapping the direction of an implication (if one exists). If successful (i.e. if this leads to equivalence to $\psi$), then this modification explains non-equivalence.

\end{description}
\footnotetext[1]{This rough description is made precise in the implementation.}

\section{Evaluation}\label{section:evaluation}

\renewcommand{\fiiill}{&&&&}
\begin{table*}
    \centering
    \setlength{\tabcolsep}{10pt}
    \def\arraystretch{1.3}
    \begin{tabular}{l|| rrrr}
        \toprule
        Evaluation summary                                                               & \multicolumn{4}{c}{overall attempts}                             \\
                                                                                         & \defaultColumnHeaders                                            \\
        \midrule
                all attempts                                                                     & 125970                               & 100.0\% & 37159 & 100.0\% \\
        equivalent attempts                                                              & 27655                                & 21.95\% & 3019  & 8.12\%  \\
        non-equivalent attempts                                                          & 98315                                & 78.05\% & 34140 & 91.88\% \\
        \midrule
        \multicolumn{5}{c}{\textbf{Equivalence}$ ^* $}                                                                                                      \\
        \midrule
        proven by Vampire (\texttt{vampire}, \texttt{casc}, \texttt{casc\_sat} combined) & 27655 & 100.0\% & 3019 & 100.0\% \\
        \midrule
        \multicolumn{5}{c}{\textbf{Non-Equivalence}$ ^{**} $}                                                                                               \\
        \midrule
        proven by Vampire (\texttt{vampire}, \texttt{casc}, \texttt{casc\_sat} combined) & 98315                                & 100.0\% & 34140 & 100.0\% \\
        counter model found                                                              & 97465 & 99.14\% & 33929 & 99.38\% \\
        \tableindent{} via Vampire (finite model building)                               & 95518 & 97.16\% & 33238 & 97.36\% \\
        \tableindent{} via Random models                                                 & 78691 & 80.04\% & 24168 & 70.79\% \\
        \tableindent{} exclusively via Vampire (finite model building)                   & 18774 & 19.10\% & 9761 & 28.59\% \\
        \tableindent{} exclusively via Random models                                     & 1947 & 1.98\% & 691 & 2.02\% \\
        \midrule
        \multicolumn{5}{c}{\textbf{Explanations}$ ^{**} $}                                                                                                  \\
        \midrule
        Explanation by strategy \fiiill                                                                                                                     \\
        \tableindent{} S-1 Missing symbol names                                          & 25500                                & 25.94\% & 8052  & 23.58\% \\
        \tableindent{} S-2 Permuted arguments of a symbol                                & 429                                  & 0.44\%  & 126   & 0.37\%  \\
        \tableindent{} S-3 Different relation symbol names                               & 594                                  & 0.60\%   & 123   & 0.36\%  \\
        \tableindent{} S-4 Different terms                                               & 1016                                 & 1.03\%  & 116   & 0.34\%  \\
        \tableindent{} Q-1 Different quantifier prefixes                                 & 1603                                 & 1.63\%  & 357   & 1.05\%  \\
        \tableindent{} Q-2 Different quantification patterns                             & 1047                                 & 1.06\%  & 300   & 0.88\%  \\
        \tableindent{} Q-3 Different free variables                                      & 10171                                & 10.34\% & 5375  & 15.74\% \\
        \tableindent{} G-1 Different guards                                              & 11407                                & 11.60\%  & 2038  & 5.97\%  \\
        \tableindent{} G-2 Wrong guarding operators                                      & 7445                                 & 7.57\%  & 1019  & 2.98\%  \\
        \tableindent{} Q-1 + G-1: Different quantifier prefixes and guards               & 10469                                & 10.65\% & 1439  & 4.21\%  \\
        \tableindent{} B-1 Different negation patterns                                   & 582                                  & 0.59\%  & 110   & 0.32\%  \\
        \tableindent{} B-2 Swapped implications                                          & 554                                  & 0.56\%  & 31    & 0.09\%  \\
        $\ge 1$ strategy                                                                 & 51269                                & 52.15\% & 15610 & 45.72\% \\
        \bottomrule
    \end{tabular}
    \caption{Summary of the evaluation.\\
        \setlength\tabcolsep{2pt}
        \begin{tabular}{r l}
        $ ^* $ & Percentages relative to equivalent attempts. \\
        $ ^{**} $ & Percentages relative to non-equivalent attempts.
        \end{tabular}}\label{table:summary-all-exercises}
\end{table*}

In this section, we evaluate effectiveness and efficiency of determining (non-)equivalence, finding counter examples, and computing explanations on a large educational data set. We address the research questions stated in the introduction:

\begin{itemize}
    \item RQ1: Can the correctness of student attempts be determined in practice for first-order logic modelling tasks?
    \item RQ2: In case of an incorrect attempt, can a counterexample witnessing incorrectness be provided?
    \item RQ3: In case of an incorrect attempt, how can high-level explanations be computed in practice?
\end{itemize}

Our data set was collected via an educational support system over four semesters in courses on ``Logic in Computer Science'' at two universities with $> 300$ students in each course iteration. The students were tasked with modelling statements given in natural language with first-order logic formulas over a given vocabulary, for which an instructor had provided a solution formula and (if applicable) a background theory. The data has been collected for 14 scenarios with in total 62 different statements to be modelled (see \insertDependingOnAppendix{Appendix~\ref{section:appendix-exercises}}{extended version \cite{VehlkenZeumeCBCP2026extd}} for an overview as well as an example scenario).

The data set consists of 125,970 tuples $(\text{id}, \Gamma, \psi, \varphi)$, containing an \text{id}, background theory $\Gamma$, solution formula $\psi$, and student attempt $\varphi$. In our evaluation we measure performance with respect to (i) all such tuples, and (ii) tuples with distinct solution attempts.  The number of distinct solution attempts is 37,159. The data set is summarized in the first lines of Table \ref{table:summary-all-exercises}.

One reason for the high number of attempts, and in particular incorrect attempts, is likely that the system provides feedback  after every attempt and students have an unlimited number of attempts per exercise.  Because of the large number of attempts, we used timeouts in our evaluations.
Unless stated otherwise, the experiments were run with a timeout of 20 seconds.

\subsection{RQ1: Determining Correctness of Student Attempts}

\subsubsection{Setting.}
For answering whether correctness of student attempts can be determined in practice, we evaluate (1) equivalence for all pairs of formulas $\psi$ and $\varphi$ modulo $\Gamma$ with the theorem prover Vampire, and (2) measure how fast \mbox{(non-)}equivalence can be tested.

This and all other experiments in this paper were run on a small server (16x 2.4 GHz CPU cores, 48 GB RAM). We use version $ 4.8 $ of Vampire and the three modes $ \mathtt{vampire} $, $ \mathtt{casc} $, and $ \mathtt{casc\_sat} $ \cite{BartekBCHHHKRRRSSV25}. Preliminary experiments indicated that no single mode is optimal for all our formula pairs. Therefore, we start a \emph{race} between the three different modes by running them in parallel in independent processes. As soon as one returns a decisive answer (indicating equivalence or non-equivalence), this answer is returned and all other processes are terminated.

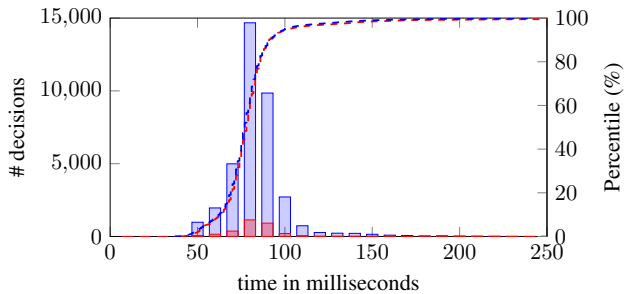
\begin{figure}[t!]
    \begin{tikzpicture}[scale=0.85]
        \begin{axis}[
            name=hist,
            width=\linewidth,
            height=5cm,
            xlabel={time in milliseconds},
            ylabel={\# decisions},
            xticklabel={\pgfmathparse{\tick*10}\pgfmathprintnumber{\pgfmathresult}},
            xmin=0,
            xmax=12,
            ymax=32000,
            enlargelimits=false,
            ytick pos=left,
            scaled y ticks=false,
            yticklabel style={
                /pgf/number format/fixed,
                /pgf/number format/1000 sep={,\!}
            },
        ]

        \addplot[
            ybar,
            draw=blue,
            fill=blue,
            bar width=5pt,
            fill opacity=.2,
            bar shift=-3pt
        ] table [
            x index=0,
            y index=1 ]
        {plot-data/decisiveTimeBuckets.txt};

        \addplot[
            ybar,
            draw=red,
            fill=red,
            bar width=5pt,
            fill opacity=.3,
            bar shift=3pt
        ] table [
            x index=0,
            y index=1 ]
        {plot-data/equivTimeBuckets.txt};

        \end{axis}

        \begin{axis}[
            at={(hist.south west)},
            anchor=south west,
            width=\linewidth,
            height=5cm,
xmin=0,
            xmax=12,
            axis x line=none,
            axis y line*=right,
            ylabel={Percentile (\%)},
            ymin=0,
            ymax=100,
ytick pos=right,
            minor y tick num=0,  
        ]

        \addplot[
            red,
            thick,
            dashed
        ] table [
            x index=0,
            y index=1,
            mark=none,
            each nth point=10
        ] {plot-data/equivPercentilesx10Vampire.txt};
        
        \addplot[
            blue,
            thick,
            dashed
        ] table [
            x index=0,
            y index=1,
            mark=none,
            each nth point=100
        ] {plot-data/decisivePercentilesx10Vampire.txt};
        \end{axis}
    \end{tikzpicture}
    \caption{Time in milliseconds Vampire needed to determine whether pairs of formulas are equivalent or non-equivalent (blue) and time needed for equivalent pairs only (red). The blue and red lines show the percentiles for decisions made. Only distinct pairs of formulas are considered here.
    \\ Note: The bars are shifted slightly to the sides for visual clarity.
    }
    \label{figure:equivalence:time}
\end{figure}

\subsubsection{Results.} For all of the 125,970 formula pairs, \mbox{(non-)} equivalence was determined within the 20 second timeout (see Table \ref{table:summary-all-exercises}). It turns out that for all except 26 pairs (21 distinct pairs), \mbox{(non-)}equivalence could be decided within 120ms, also see Figure \ref{figure:equivalence:time}.

In summary, testing correctness of solution attempts in first-order modelling tasks is practically feasible in educational scenarios.

\subsection{RQ2: Finding Counter Examples}

\subsubsection{Setting.}
For answering whether counter examples witnessing non-equivalence of formulas can be found in practice, we evaluate (1) for which ratio of pairs Vampire and a random model generator can find counter example models, (2) how fast, and (3) how large the generated counter models are (as a proxy for how useful counter examples are for explaining non-equivalence). We note that both methods only build finite models, so it is possible that counter examples for some of the formula pairs cannot be found this way.

The random model generator was parameterized to generate models of size 1 to 10, the probability of picking a tuple was $ 0.5 $ (cf. \cref{section:setting}). For each universe size $ m $, up to $ m \cdot 1000 $ models were generated and tested. Vampire was instructed to use its finite model builder via the flag \texttt{--saturation\_algorithm fmb}.

\begin{figure}[t!]
\begin{tikzpicture}[scale=0.85]
    \begin{axis}[
        ybar,
        xmin=0.2,
        xmax=7.7,
        ymin=0,
        ymax=16000,
        xtick={1,...,7},
        scaled y ticks=false,
        yticklabel style={
            /pgf/number format/fixed,
            /pgf/number format/1000 sep={,\!}
        },
        width=\linewidth,
        height=5cm,
        ylabel={\# models},
        xlabel={model sizes},
        bar width=5pt,
    ]
\addplot+[
        ybar,
        draw=blue,
        fill=blue,
        bar width=5pt,
        fill opacity=.2,
        bar shift=-3pt
    ] coordinates {
        (1,15375)
(2,11602)
        (3,5042)
        (4,1193)
        (5,18)
        (6,6)
        (7,2)
        (8,0)
        (9,0)
        (10,0)
        (11,0)
        (12,0)
        (13,0)
    };
    \addplot+[
        ybar,
        draw=red,
        fill=red,
        bar width=5pt,
        fill opacity=.2,
        bar shift=3pt
    ] coordinates {
(1,15662)
        (2,8000)
        (3,481)
        (4,25)
        (5,0)
        (6,0)
        (7,0)
        (8,0)
        (9,0)
        (10,0)
        (11,0)
        (12,0)
        (13,0)
    };

\end{axis}
\end{tikzpicture}
\caption{Number of models witnessing non-equivalence of formulas found for each universe size by Vampire (blue) and number of models found for each size by the random model generator (red).
\\ Note: The bars are shifted slightly to the sides for visual clarity. }
\end{figure}
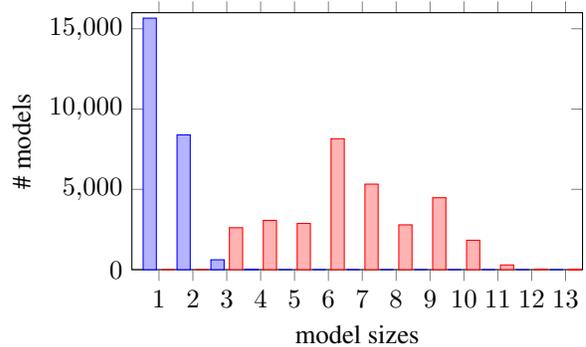

\begin{figure}[t!]
    \begin{tikzpicture}[scale=0.85]
        \begin{axis}[
            name=hist,
            width=\linewidth,
            height=5cm,
            xlabel={time in milliseconds},
            ylabel={\# decisions},
            xticklabel={\pgfmathparse{\tick*10}\pgfmathprintnumber{\pgfmathresult}},
            xmin=0,
            xmax=20,
            ymax=34000,
            enlargelimits=false,
            ytick pos=left,
            scaled y ticks=false,
            yticklabel style={
                /pgf/number format/fixed,
                /pgf/number format/1000 sep={,\!}
            },
        ]

        \addplot[
            ybar,
            draw=blue,
            fill=blue,
            bar width=5pt,
            fill opacity=.2,
            bar shift=-3pt
        ] table [
            x index=0,
            y index=1 ]
        {plot-data/fmbTimeBuckets.txt};

        \addplot[
            ybar,
            draw=red,
            fill=red,
            bar width=5pt,
            fill opacity=.3,
            bar shift=3pt
        ] table [
            x index=0,
            y index=1 ]
        {plot-data/rndTimeBuckets.txt};

        \end{axis}

        \begin{axis}[
            at={(hist.south west)},
            anchor=south west,
            width=\linewidth,
            height=5cm,
xmax=20,
            xmin=0,
            axis x line=none,
            axis y line*=right,
            ylabel={Percentile (\%)},
            ymin=0,
            ymax=100,
ytick pos=right,
]

        \addplot[
            red,
            thick,
            dashed
        ] table [
            x index=0,
            y index=1,
            mark=none,
            each nth point=10
        ] {plot-data/rndPercentiles10.txt};
        
        \addplot[
            blue,
            thick,
            dashed
        ] table [
            x index=0,
            y index=1,
            mark=none,
            each nth point=100
        ] {plot-data/fmbPercentiles10.txt};
        \end{axis}
    \end{tikzpicture}
    \caption{Time needed by Vampire's finite model builder (blue) and by the random model generator (red). The blue and red lines show the percentiles of counter examples found with respect to all distinct non-equivalent pairs of formulas. 
    \\ Note: The bars are shifted slightly to the sides for visual clarity. }
    \label{figure:countermodels:time}
\end{figure}
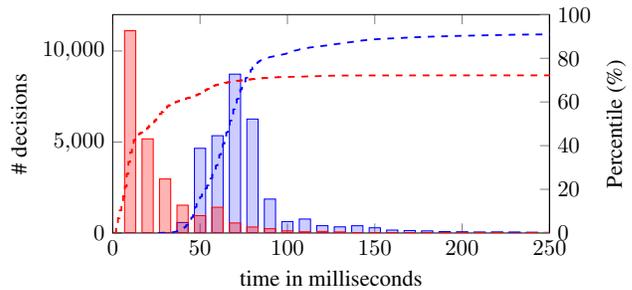

\subsubsection{Results.} Counter examples were found for 99.14\% of the pairs of non-equivalent formulas by at least one of the methods, with Vampire finding counter examples for 97.16\% of all pairs and the random model generator finding counter examples for 80.04\% of all pairs (see \cref{table:summary-all-exercises}). Thus, Vampire could determine non-equivalence of all non-equivalent pairs, but not derive counter examples for a fraction of formula pairs. There are 19.1\% and 1.98\% pairs for which solely Vampire or the random model generator, respectively, found a counter example.

All counter examples found by Vampire were returned within 106ms, also see Figure \ref{figure:countermodels:time}.

In summary, counter examples can be found sufficiently efficiently for most pairs of formulas to provide a first explanation of incorrect student attempts.

\subsection{RQ3: Computing High-level Explanations}
\subsubsection{Setting.}
For answering whether explanations for non-equivalence can be computed in practice, we evaluate (1) which ratio of pairs can be explained by our strategies, and (2) how fast. The expectation, guided by our experience in grading, is that while a significant fraction will be explainable by our strategies, there is also a significant fraction of nonsensical solution attempts.

We implemented all strategies as well as a testing framework to apply them to the data set. Each strategy was applied to every formula pair. In addition to our strategies from Section \ref{section:methods} we also used a strategy Q-1+G-1 that combines modifying differing quantifier prefixes and differing guards. This strategy was added after inspecting pairs that were unexplained by the other strategies.

For determining efficiency in practical scenarios, we ran our experiment also with a cache to store equivalence decisions for pairs of first-order formulas. The cache was initially filled with the equivalence decision of all the 37,159 pairs of formulas. After each Vampire call, the cache was extended with the resulting decisions.

Individual Vampire calls had a timeout of 30s, but the strategies themselves did not have an explicit timeout.

\begin{figure}[t!]
    \begin{tikzpicture}[scale=0.85]
        \begin{axis}[
            name=hist,
            width=\linewidth,
            height=5cm,
            xlabel={time in seconds},
            ylabel={\# decisions},
            xmin=0,
            xmax=100,
            ymax=35000,
            ymode=log,
            enlargelimits=false,
            ytick pos=left,
            scaled y ticks=false,
            yticklabel style={
                /pgf/number format/fixed,
                /pgf/number format/1000 sep={,\!}
            },
        ]

        \hideablePlot{
            \addplot[
                ybar,
                draw=blue,
                fill=blue,
                bar width=1pt,
                fill opacity=.2,
            ] table [
                x index=0,
                y index=1 ]
            {plot-data/stratBuckets-s.txt};
        }

        \hideablePlot{
            \addplot[
                ybar,
                draw=red,
                fill=red,
                bar width=1pt,
                fill opacity=.3,
            ] table [
                x index=0,
                y index=1 ]
            {plot-data/stratCacheBuckets-s.txt};
        }

        \end{axis}

        \begin{axis}[
            at={(hist.south west)},
            anchor=south west,
            width=\linewidth,
            height=5cm,
            xmax=100,
            xmin=0,
            axis x line=none,
            axis y line*=right,
            ylabel={Percentile (\%)},
            ymin=0,
            ymax=100,
            ytick pos=right,
        ]
        
        \hideablePlot{
            \addplot[
                blue,
                thick,
                dashed
            ] table [
                x index=0,
                y index=1,
                mark=none,
                each nth point=10
            ] {plot-data/stratBucketPercentiles-s.txt};
        }

        \hideablePlot{
            \addplot[
                red,
                thick,
                dashed
            ] table [
                x index=0,
                y index=1,
                mark=none,
                each nth point=10
            ] {plot-data/stratCacheBucketPercentiles-s.txt};
        }
        \end{axis}
    \end{tikzpicture}
    \caption{Time in seconds needed to run \emph{all} strategies (sequentially) for each distinct pair of non-equivalent formulas without a cache (blue) and with a cache (red). The blue and red lines show the percentiles for decisions made. Note that the \# decisions axis is in logarithmic scale. Individual Vampire calls had a timeout of 30s, but the strategies themselves did not have an explicit timeout.}
    \label{figure:time:strategies}
\end{figure}
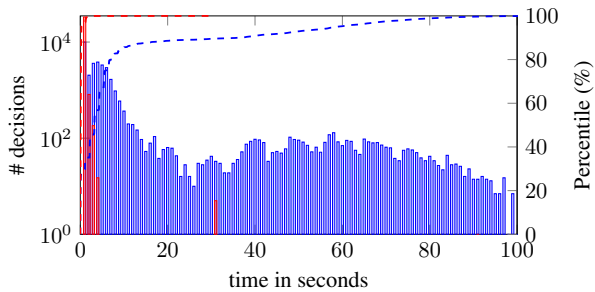
\subsubsection{Results.} Our strategies explain 52.15\% of all non-equivalent formula pairs and 45.72\% of distinct non-equivalent formula pairs.
A detailed split for how many (total and distinct) pairs are explained by the individual strategies is provided in Table \ref{table:summary-all-exercises}.

For 87\% of all pairs, running all strategies sequentially took less than 10 seconds (see Figure \ref{figure:time:strategies}, the \# decisions axis is in logarithmic scale). Using a cache significantly reduces the average runtime from 9.1 seconds without using a cache, to 0.15 seconds, see Figure \ref{figure:time:strategies}. With cache, all but 200 pairs need at most 2 seconds. This indicates that the time needed to run the strategies is dominated by the time needed to run Vampire. Thus, in addition to exploit that pairs of formulas might occur multiple times, a cache is beneficial in practice as also (different) strategies may test the same pairs of formulas multiple times.

In summary, for more than half of all incorrect student attempts, a conceptual explanation can be found. These explanations can also be computed quickly enough to provide students with interactive feedback in an educational support system.

We expect that by fine-tuning the current implementation, the ratio of pairs for which conceptual explanations can be provided can be further increased (e.g.~by adding further strategies and also combining strategies). Also the time needed to compute explanations can likely be decreased by engineering the used algorithms.

\section{Summary, Impact, and Future Work}\label{section:summary}
We designed algorithms for explaining non-equivalence of first-order formulas and evaluated equivalence testing, finding counter models, and the methods for explaining on large educational data sets with $> 100{,}000$ pairs of first-order formulas.

Our evaluation shows that for educational tasks on first-order modelling, (1) modern theorem provers like Vampire can easily determine correctness of modelling attempts, also in the presence of background theories; (2) counter example models can easily be computed for most incorrect modelling attempts by combining Vampire's finite model builder and a random model generator; and (3) conceptual explanations for incorrectness can be computed for a large fraction of modelling attempts ($> 50\%$ of incorrect attempts in our data set). Overall, the computations for (1) – (3) are fast enough to support interactive feedback and to scale to large cohorts.

Our computational strategies for finding explanations are currently being integrated into a state-of-the-art educational support system for learning logic used by $> 1{,}000$ students per year.  We plan to analyse the educational data from this integration using our algorithms to better understand how students learn foundations of logics. 
\section*{Acknowledgements}
We thank Jean Christoph Jung for insightful discussions about Beth definability and interpolation in first-order logic.
This work was supported by the Deutsche Forschungsgemeinschaft (DFG, German Research Foundation), grant 448468041.

\section*{AI Declaration}
Generative AI has only been used for editorial improvements in the writing of this paper.

\bibliographystyle{kr}
\bibliography{bibliography}

\ifwithappendix
\clearpage

\appendix
\section{Details on the Exercises}\label{section:appendix-exercises}
The data used for the evaluation in Section \ref{section:evaluation} originates from exercises for modelling statements in first-order logic. We provide (i) one example of an exercise in Section \ref{section:milisoft}, and (ii) the sample first-order formulas for solutions as well as background theories for all other exercises in Section \ref{section:exercises}.

The data was collected for exercises posed in German, we provide English translations. We only provide the precise statement for one exercise, as the exercises are part of an exercise pool actively used in large logic courses.

\subsection{Sample Exercise: Employees at Millisoft (Exercise E-10)}\label{section:milisoft}

\subsubsection{Assignment text}

    At the web company Millisoft, all employees are either computer scientists or mathematicians.
    The employees work together in teams, with one of the employees acting as team leader. Each employee works in exactly one team.
    The following basic rules apply to the composition of the teams:
    \begin{enumerate}
        \item There is (at least) one mathematician working at Millisoft. 
        \item There is a team leader who is a computer scientist.
        \item Everyone works with at least two computer scientists (one of whom can be themselves) in the same team.
        \item There is one employee whose team consists only of computer scientists.
        \item If the team leader is a mathematician, this does not apply to any other member of the team.
    \end{enumerate}

    In order to check these rules automatically, the aim is to express the rules formally as first-order logic formulas with equality. We model the relevant aspects of the company as a structure \(\mathcal{A}=(A, I^\mathcal{A}, M^\mathcal{A}, G^\mathcal{A}, f^\mathcal{A})\) with the following meaning:

    \begin{itemize}
        \item \(A\) is the universe and consists of all employees of the company
        \item \(I^\mathcal{A} = \{x \in A \mid x \text{ is computer scientist}\}\) is a unary relation
        \item \(M^\mathcal{A} = \{x \in A \mid x \text{ is mathematician}\}\) is a unary relation
        \item \(G^\mathcal{A} = \{(x, y) \in A^2 \mid x \text{ works in the same team as } y\}\) is a binary relation
        \item \(f^\mathcal{A}\) is a unary function that assigns to each employee \(x\) the team leader \(f(x)\) of their team
    \end{itemize}Model the above statements by first-order logic formulas with equality over the signature $\{I, M, G, f\}$ with unary relation symbols $I$ and $M$, binary relation symbol $G$, and unary function symbol $f$ with the intended meaning described above.

\subsubsection{Sample solution}
    \begin{enumerate}
        \item There is (at least) one mathematician working at Millisoft.\\
					\(\exists x\ M(x)\)
        \item There is a team leader who is a computer scientist.\\
        \(\exists x\ \exists y\ ((x=f(y))\wedge I(x))\)
        \item Everyone works with at least two computer scientists (one of whom can be themselves) in the same team.\\
        \(\forall x\ \exists y\ \exists z\ (\neg (y=z)\wedge G(x,y)\wedge G(x,z)\wedge I(y)\wedge I(z))\)
        \item There is one employee whose team consists only of computer scientists.\\
        \(\exists x\ \forall y\ (G(x,y)\rightarrow I(y))\)
        \item If the team leader is a mathematician, this does not apply to any other member of the team.\\
        \(\forall x\ (M(f(x))\rightarrow (\neg (x=f(x))\rightarrow \neg M(x)))\)
    \end{enumerate}

\subsubsection{Background theory}

The background theory models common assumptions and is provided by instructors. It helps students to focus on the statements to be modelled by them.

Background theory for the Millisoft scenario:
\begin{itemize}
 \item Persons in the same team, have the same team leader.\\
 \(\forall x\ \forall y\ (G(x,y)\leftrightarrow (f(x)=f(y)))\)
 \item The domain is a disjoint union of mathematicians and computer scientists.\\
	\(\forall x\ (M(x)\leftrightarrow \neg I(x))\)
 \item Team membership is an equivalence relation.\\
 \(\forall x\ \forall y\ \forall z\ ((G(x,y)\wedge G(y,z))\rightarrow G(x,z))\)

 \(\forall x\ \forall y\ (G(x,y)\rightarrow G(y,x))\)

 \(\forall x\ G(x,x)\)
    \item The team leader of a team member is in the same team.\\
	\(\forall x\ \forall y\ ((f(x)=y)\rightarrow G(x,y))\) \\
\end{itemize}

\subsection{Overview of Exercises: Solution Formulas and Constraints}\label{section:exercises}

Table \ref{table:overview-scenarios-formulas} gives an overview over all scenarios and their sample solutions. Table \ref{table:overview-scenarios-constraints} provides the background theories for all scenarios. A summary of our data set for all exercises is provided in Tables \ref{table:exercise-summary-1} and \ref{table:exercise-summary-2}.

\begin{table*}
    \scriptsize
    \begin{tabular}{lll|l}
	Id  & Scenario & Solution id & Solution formula \\
    \midrule
	E-1 & OnlineShop & \makecell[l]{
E-1-1 \\ 
E-1-2 \\ 
E-1-3} & \makecell[l]{
$\forall x\ (B(x,c)\rightarrow \exists y\ (B(x,y)\wedge \neg (y=c)))$ \\ 
$\exists x\ (V(x,c)\wedge \exists y\ (B(y,x)\wedge \forall z\ (B(z,x)\rightarrow (z=y))))$ \\ 
$\forall x\ \forall y\ \forall z\ ((B(x,y)\wedge B(x,z))\rightarrow (\neg V(y,z)\vee (y=c)\vee (z=c)))$} \\
 \hline
	E-2 & Murder Mystery & \makecell[l]{
E-2-1 \\ 
E-2-2 \\ 
E-2-3 \\ 
E-2-4 \\ 
E-2-5 \\ 
E-2-6 \\ 
E-2-7 \\ 
E-2-8 \\ 
E-2-9 \\ 
E-2-10 \\ 
E-2-11} & \makecell[l]{
$D(a)\wedge D(b)\wedge D(c)$ \\ 
$\forall x\ (D(x)\rightarrow ((x=a)\vee (x=b)\vee (x=c)))$ \\ 
$\exists x\ (D(x)\wedge K(x,a))$ \\ 
$\forall x\ \forall y\ (K(x,y)\rightarrow H(x,y))$ \\ 
$\forall x\ \forall y\ (K(x,y)\rightarrow \neg R(x,y))$ \\ 
$\forall x\ (H(a,x)\rightarrow \neg H(c,x))$ \\ 
$\forall x\ (\neg (x=b)\rightarrow H(a,x))$ \\ 
$\forall x\ (\neg R(x,a)\rightarrow H(b,x))$ \\ 
$\forall x\ (H(a,x)\rightarrow H(b,x))$ \\ 
$\neg \exists x\ \forall y\ H(x,y)$ \\ 
$\neg (a=b)$} \\
 \hline
	E-3 & Film Database & \makecell[l]{
E-3-1 \\ 
E-3-2 \\ 
E-3-3 \\ 
E-3-4} & \makecell[l]{
$\exists x\ L(x,p)$ \\ 
$\forall x\ (F(x)\rightarrow \exists y\ (F(y)\wedge S(x,y)))$ \\ 
$U(x)\wedge \exists y\ \exists z\ (\neg (y=z)\wedge F(y)\wedge F(z)\wedge L(x,y)\wedge L(x,z)\wedge \neg S(y,z))$ \\ 
$F(x)\wedge S(x,p)\wedge \forall y\ ((U(y)\wedge L(y,x))\rightarrow L(y,p))$} \\
 \hline
	E-4 & Social Network & \makecell[l]{
E-4-1 \\ 
E-4-2 \\ 
E-4-3 \\ 
E-4-4} & \makecell[l]{
$\neg I(t)$ \\ 
$\forall x\ (F(x,t)\rightarrow F(t,x))$ \\ 
$\exists x\ (B(x,t)\wedge \forall y\ (B(x,y)\rightarrow (y=t)))$ \\ 
$\forall x\ ((\neg (x=t)\wedge \neg F(x,t))\rightarrow \exists y\ (F(x,y)\wedge F(y,t)))$} \\
 \hline
	E-5 & Volleyball & \makecell[l]{
E-5-1 \\ 
E-5-2 \\ 
E-5-3 \\ 
E-5-4 \\ 
E-5-5} & \makecell[l]{
$T(m)$ \\ 
$\forall x\ \forall y\ \exists z\ ((S(x)\wedge M(y))\rightarrow G(x,y,z))$ \\ 
$\forall x\ \forall y\ (F(x,y)\rightarrow (\neg T(x)\vee \neg T(y)))$ \\ 
$\exists x\ \forall y\ \forall z\ (S(x)\wedge (G(x,y,z)\rightarrow (\neg F(y,z)\wedge \neg (a(y)=z))))$ \\ 
$\exists x\ (M(x)\wedge (a(x)=m))\wedge T(a(m))$} \\
 \hline
	E-6 & Graph & \makecell[l]{
E-6-1 \\ 
E-6-2 \\ 
E-6-3 \\ 
E-6-4 \\ 
E-6-5} & \makecell[l]{
$\forall v\ \neg E(v,v)$ \\ 
$\forall v\ \exists u\ (\neg (v=u)\wedge E(v,u))$ \\ 
$\forall v\ \forall u\ (\neg (u=v)\rightarrow E(v,u))$ \\ 
$\exists v\ \exists u\ \exists w\ (\neg (v=u)\wedge \neg (u=w)\wedge \neg (w=v)\wedge E(v,u)\wedge E(u,w)\wedge E(w,v))$ \\ 
$\forall v\ (\exists u\ E(u,v)\rightarrow \exists u\ E(v,u))$} \\
 \hline
	E-7 & Murder Mystery II & \makecell[l]{
E-7-1 \\ 
E-7-2 \\ 
E-7-3} & \makecell[l]{
$\forall x\ (D(x)\rightarrow \exists y\ M(y,x))$ \\ 
$\forall x\ (H(x,x)\rightarrow \neg \exists y\ M(x,y))$ \\ 
$\exists x\ (D(x)\wedge R(x,c))$} \\
 \hline
	E-8 & Strings & \makecell[l]{
E-8-1 \\ 
E-8-2 \\ 
E-8-3} & \makecell[l]{
$\forall x\ \forall y\ (\neg Z_c(x)\wedge ((Z_a(x)\wedge Z_b(y))\rightarrow (x\le y)))$ \\ 
$\exists x\ ((l(x)=x)\wedge \neg (x=r(x))\wedge \neg (r(x)=r(r(x)))\wedge Z_a(x)\wedge Z_b(r(x))\wedge Z_c(r(r(x))))$ \\ 
$\forall x\ \forall y\ (((r(x)=y)\wedge \neg (x=y))\rightarrow ((Z_a(x)\wedge Z_b(y))\vee (Z_b(x)\wedge Z_a(y))))$} \\
 \hline
	E-9 & Theorems & \makecell[l]{
E-9-1 \\ 
E-9-2 \\ 
E-9-3 \\ 
E-9-4} & \makecell[l]{
$\forall x\ \forall y\ \forall z\ (((x\circ y)\circ z)=(x\circ (y\circ z)))$ \\ 
$\forall x\ ((e\circ x)=x)$ \\ 
$\forall x\ \exists y\ ((y\circ x)=e)$ \\ 
$\forall x\ \forall y\ \exists z\ ((x\circ z)=y)$} \\
 \hline
	E-10 & Millisoft & \makecell[l]{
E-10-1 \\ 
E-10-2 \\ 
E-10-3 \\ 
E-10-4 \\ 
E-10-5} & \makecell[l]{
$\exists x\ M(x)$ \\ 
$\exists x\ \exists y\ ((x=f(y))\wedge I(x))$ \\ 
$\forall x\ \exists y\ \exists z\ (\neg (y=z)\wedge G(x,y)\wedge G(x,z)\wedge I(y)\wedge I(z))$ \\ 
$\exists x\ \forall y\ (G(x,y)\rightarrow I(y))$ \\ 
$\forall x\ (M(f(x))\rightarrow (\neg (x=f(x))\rightarrow \neg M(x)))$} \\
 \hline
	E-11 & Web pages & \makecell[l]{
E-11-1 \\ 
E-11-2 \\ 
E-11-3 \\ 
E-11-4} & \makecell[l]{
$\forall x\ (M(x)\rightarrow \neg V(s(x)))$ \\ 
$\forall x\ (E(x)\rightarrow (\exists y\ (M(y)\wedge L(x,y))\vee \exists y\ (N(y)\wedge L(y,x))))$ \\ 
$\exists y\ (L(s(p),y)\wedge L(y,e))$ \\ 
$\neg \exists x\ (N(x)\wedge \neg \exists y\ L(x,y))$} \\
 \hline
	E-12 & Recipe & \makecell[l]{
E-12-1 \\ 
E-12-2 \\ 
E-12-3 \\ 
E-12-4} & \makecell[l]{
$\exists x\ R(x)\wedge \exists x\ Z(x)$ \\ 
$\exists x\ (R(x)\wedge (f(x)=v)\wedge \exists y\ E(x,y))$ \\ 
$\exists x\ (R(x)\wedge \exists y\ (E(x,y)\wedge \neg \exists z\ (E(x,z)\wedge \neg (z=y))))$ \\ 
$\forall x\ \forall y\ ((R(x)\wedge R(y)\wedge (f(x)=f(y)))\rightarrow \exists z\ (E(x,z)\wedge E(y,z)))$} \\
 \hline
	E-13 & Book Collection & \makecell[l]{
E-13-1 \\ 
E-13-2 \\ 
E-13-3 \\ 
E-13-4} & \makecell[l]{
$M(f(p))\wedge L(p)$ \\ 
$\forall x\ \forall y\ ((W(x,y)\wedge L(y))\rightarrow L(x))$ \\ 
$\exists x\ (B(x)\wedge \neg M(f(x))\wedge \forall y\ \neg W(y,x))$ \\ 
$\forall x\ (A(x)\rightarrow \exists y\ ((f(y)=x)\wedge B(y)\wedge (L(y)\vee \exists z\ W(z,y))))$} \\
 \hline
	E-14 & Faculty Conference & \makecell[l]{
E-14-1 \\ 
E-14-2 \\ 
E-14-3} & \makecell[l]{
$\exists x\ (P(x)\wedge (g(x)=l)\wedge \forall y\ (K(y)\rightarrow \neg B(x,y)))$ \\ 
$\forall x\ \forall y\ ((K(x)\wedge A(y))\rightarrow \neg \exists w\ \exists z\ (\neg (w=z)\wedge (g(w)=y)\wedge (g(z)=y)\wedge B(w,x)\wedge B(z,x)))$ \\ 
$\exists x\ (A(x)\wedge \forall y\ (K(y)\rightarrow (\forall z\ ((P(z)\wedge (g(z)=x))\rightarrow (B(z,x)\vee V(z,x)))\vee \neg \exists z\ (P(z)\wedge (g(z)=x)\wedge (B(z,x)\vee V(z,x))))))$} \\
    \bottomrule
\end{tabular}
\caption{Exercise scenarios with solution formulas. We use the convention that $ x $, $ y $, $ z $ are variables, capital letters are relation symbols, and lower-case letters are function or constant symbols.
}\label{table:overview-scenarios-formulas}
\end{table*}

\begin{table*}
    \scriptsize
    \begin{tabular}{ll|l}
	Id & Scenario & Constraints \\
    \midrule
	E-1 & Online Shop & \makecell[l]{
$\forall x\ ((A(x)\wedge \neg N(x))\vee (\neg A(x)\wedge N(x)))$ \\ 
$A(c)$ \\ 
$\forall x\ \forall y\ (B(x,y)\rightarrow (N(x)\wedge A(y)))$ \\ 
$\forall x\ \forall y\ (V(x,y)\rightarrow (A(x)\wedge A(y)\wedge V(y,x)\wedge \neg (x=y)))$} \\
 \hline
	E-2 & Murder Mystery & \makecell[l]{
$\forall x\ \neg R(x,x)$ \\ 
$\forall x\ \forall y\ (R(x,y)\rightarrow \neg R(y,x))$} \\
 \hline
	E-3 & Film Database & \makecell[l]{
$\forall f\ \forall g\ (S(f,g)\rightarrow S(g,f))$ \\ 
$\forall f\ \neg S(f,f)$ \\ 
$\forall f\ \forall g\ (S(f,g)\rightarrow (F(f)\wedge F(g)))$ \\ 
$\forall u\ \forall f\ (L(u,f)\rightarrow (U(u)\wedge F(f)))$ \\ 
$\forall x\ ((U(x)\wedge \neg F(x))\vee (F(x)\wedge \neg U(x)))$ \\ 
$F(p)$} \\
 \hline
	E-4 & Social Network & \makecell[l]{
$\forall x\ \forall y\ (B(x,y)\rightarrow \neg (F(x,y)\vee F(y,x)))$ \\ 
$\forall x\ \neg F(x,x)$ \\ 
$\forall x\ \neg B(x,x)$} \\
 \hline
	E-5 & Volleyball & \makecell[l]{
$\forall x\ (\neg S(x)\leftrightarrow M(x))$ \\ 
$\forall x\ \forall y\ \forall z\ (G(x,y,z)\rightarrow G(x,z,y))$ \\ 
$\forall x\ \forall y\ \forall z\ (G(x,y,z)\rightarrow \neg (y=z))$ \\ 
$\forall x\ \forall y\ (F(x,y)\rightarrow F(y,x))$ \\ 
$\forall x\ \forall y\ (F(x,y)\rightarrow \neg (x=y))$ \\ 
$\forall x\ (T(x)\rightarrow M(x))$ \\ 
$\forall x\ \forall y\ \forall z\ (G(x,y,z)\rightarrow (S(x)\wedge M(y)\wedge M(z)))$ \\ 
$\forall x\ \forall y\ (F(x,y)\rightarrow (M(x)\wedge M(y)))$ \\ 
$\forall x\ \exists y\ (M(x)\rightarrow ((a(x)=y)\wedge M(y)\wedge \neg (x=y)))$ \\ 
$M(m)$} \\
 \hline
	E-6 & Graph &  \\
 \hline
	E-7 & Murder Mystery II & \makecell[l]{
$\forall x\ \neg R(x,x)$ \\ 
$\forall x\ \forall y\ (R(x,y)\rightarrow \neg R(y,x))$ \\ 
$D(a)\wedge D(b)\wedge D(c)$} \\
 \hline
	E-8 & Strings & \makecell[l]{
$\forall x\ (Z_a(x)\vee Z_b(x)\vee Z_c(x))$ \\ 
$\forall x\ \neg (Z_a(x)\wedge Z_b(x))$ \\ 
$\forall x\ \neg (Z_a(x)\wedge Z_c(x))$ \\ 
$\forall x\ \neg (Z_b(x)\wedge Z_c(x))$ \\ 
$\forall x\ (x\le x)$ \\ 
$\forall x\ \forall y\ (((x\le y)\wedge (y\le x))\rightarrow (x=y))$ \\ 
$\forall x\ \forall y\ \forall z\ (((x\le y)\wedge (y\le z))\rightarrow (x\le z))$ \\ 
$\forall x\ \forall y\ ((x\le y)\vee (y\le x))$ \\ 
$\exists x\ ((x=l(x))\wedge \neg \exists y\ ((l(y)=y)\wedge \neg (x=y))\wedge \forall z\ (x\le z))$ \\ 
$\exists x\ ((x=r(x))\wedge \neg \exists y\ ((r(y)=y)\wedge \neg (x=y))\wedge \forall z\ (z\le x))$ \\ 
$\forall x\ (\neg (x=l(x))\rightarrow ((l(x)\le x)\wedge \neg \exists y\ ((l(x)\le y)\wedge (y\le x)\wedge \neg (l(x)=y)\wedge \neg (y=x))))$ \\ 
$\forall x\ (\neg (x=r(x))\rightarrow ((x\le r(x))\wedge \neg \exists y\ ((x\le y)\wedge (y\le r(x))\wedge \neg (x=y)\wedge \neg (y=r(x)))))$} \\
 \hline
	E-9 & Theorems &  \\
 \hline
	E-10 & Millisoft & \makecell[l]{
$\forall x\ \forall y\ (G(x,y)\leftrightarrow (f(x)=f(y)))$ \\ 
$\forall x\ (M(x)\leftrightarrow \neg I(x))$ \\ 
$\forall x\ \forall y\ \forall z\ ((G(x,y)\wedge G(y,z))\rightarrow G(x,z))$ \\ 
$\forall x\ \forall y\ (G(x,y)\rightarrow G(y,x))$ \\ 
$\forall x\ G(x,x)$ \\ 
$\forall x\ \forall y\ ((f(x)=y)\rightarrow G(x,y))$} \\
 \hline
	E-11 & Web pages &  \\
 \hline
	E-12 & Recipe & \makecell[l]{
$\forall x\ ((R(x)\wedge \neg Z(x)\wedge \neg K(x))\vee (\neg R(x)\wedge Z(x)\wedge \neg K(x))\vee (\neg R(x)\wedge \neg Z(x)\wedge K(x)))$ \\ 
$K(v)$ \\ 
$\forall x\ (R(x)\rightarrow K(f(x)))$ \\ 
$\forall x\ (\neg R(x)\rightarrow (f(x)=x))$ \\ 
$\forall x\ \forall y\ (E(x,y)\rightarrow (R(x)\wedge Z(y)))$} \\
 \hline
	E-13 & Book Collection & \makecell[l]{
$B(p)$ \\ 
$\forall x\ (A(x)\vee B(x))$ \\ 
$\forall x\ (A(x)\leftrightarrow \neg B(x))$ \\ 
$\forall x\ A(f(x))$ \\ 
$\forall x\ (M(x)\rightarrow A(x))$ \\ 
$\forall x\ (L(x)\rightarrow B(x))$ \\ 
$\forall x\ \forall y\ (W(x,y)\rightarrow (B(x)\wedge B(y)))$} \\
 \hline
	E-14 & Faculty Conference & \makecell[l]{
$A(l)$ \\ 
$\forall x\ (A(x)\vee P(x)\vee K(x))$ \\ 
$\forall x\ (\neg (A(x)\wedge P(x))\wedge \neg (A(x)\wedge K(x))\wedge \neg (P(x)\wedge K(x)))$ \\ 
$\forall x\ \forall y\ ((B(x,y)\vee V(x,y))\rightarrow (P(x)\wedge K(y)))$ \\ 
$\forall x\ (P(x)\rightarrow A(g(x)))$} \\
    \bottomrule
\end{tabular}
\caption{
Exercise scenarios with background theories. We use the convention that $ x $, $ y $, $ z $ are variables, capital letters are relation symbols, and lower-case letters are function or constant symbols.}\label{table:overview-scenarios-constraints}
\end{table*}

\renewcommand{\fiiill}{&&&&&&&&&& &&&&}
\begin{table*}
    \footnotesize
    \centering
    \begin{tabular}{l| l || rrrr | rrrr | rrrr}
        \toprule 
        Exercise id & Exercise \& formulas & \multicolumn{4}{c|}{all attempts} & \multicolumn{4}{c|}{equivalent attempts} & \multicolumn{4}{c}{non-equivalent attempts}\\
        && \defaultColumnHeaders & \defaultColumnHeaders & \defaultColumnHeaders \\
        \midrule
        & All exercises combined & 125970 & 100.0\% & 37159 & 100.0\% & 27655 & 21.95\% & 3019 & 8.12\% & 98315 & 78.05\% & 34140 & 91.88\% \\
        \midrule
        	E-1 & OnlineShop & 6746 & 5.36\% & 4235 & 11.4\% & 1222 & 18.11\% & 414 & 9.78\% & 5524 & 81.89\% & 3821 & 90.22\% \\
	& \tableindent{} E-1-1 & 2327 & 34.49\% & 1185 & 27.98\% & 517 & 22.22\% & 124 & 10.46\% & 1810 & 77.78\% & 1061 & 89.54\% \\
	& \tableindent{} E-1-2 & 1619 & 24.0\% & 1038 & 24.51\% & 468 & 28.91\% & 166 & 15.99\% & 1151 & 71.09\% & 872 & 84.01\% \\
	& \tableindent{} E-1-3 & 2800 & 41.51\% & 2012 & 47.51\% & 237 & 8.46\% & 124 & 6.16\% & 2563 & 91.54\% & 1888 & 93.84\% \\
	E-2 & Murder Mystery & 4525 & 3.59\% & 782 & 2.1\% & 2353 & 52.0\% & 83 & 10.61\% & 2172 & 48.0\% & 699 & 89.39\% \\
	& \tableindent{} E-2-1 & 348 & 7.69\% & 51 & 6.52\% & 264 & 75.86\% & 12 & 23.53\% & 84 & 24.14\% & 39 & 76.47\% \\
	& \tableindent{} E-2-2 & 663 & 14.65\% & 160 & 20.46\% & 201 & 30.32\% & 8 & 5.0\% & 462 & 69.68\% & 152 & 95.0\% \\
	& \tableindent{} E-2-3 & 496 & 10.96\% & 76 & 9.72\% & 229 & 46.17\% & 5 & 6.58\% & 267 & 53.83\% & 71 & 93.42\% \\
	& \tableindent{} E-2-4 & 524 & 11.58\% & 72 & 9.21\% & 218 & 41.6\% & 2 & 2.78\% & 306 & 58.4\% & 70 & 97.22\% \\
	& \tableindent{} E-2-5 & 382 & 8.44\% & 56 & 7.16\% & 211 & 55.24\% & 7 & 12.5\% & 171 & 44.76\% & 49 & 87.5\% \\
	& \tableindent{} E-2-6 & 385 & 8.51\% & 87 & 11.13\% & 212 & 55.06\% & 11 & 12.64\% & 173 & 44.94\% & 76 & 87.36\% \\
	& \tableindent{} E-2-7 & 475 & 10.5\% & 92 & 11.76\% & 204 & 42.95\% & 11 & 11.96\% & 271 & 57.05\% & 81 & 88.04\% \\
	& \tableindent{} E-2-8 & 318 & 7.03\% & 46 & 5.88\% & 205 & 64.47\% & 5 & 10.87\% & 113 & 35.53\% & 41 & 89.13\% \\
	& \tableindent{} E-2-9 & 270 & 5.97\% & 29 & 3.71\% & 207 & 76.67\% & 5 & 17.24\% & 63 & 23.33\% & 24 & 82.76\% \\
	& \tableindent{} E-2-10 & 402 & 8.88\% & 84 & 10.74\% & 194 & 48.26\% & 9 & 10.71\% & 208 & 51.74\% & 75 & 89.29\% \\
	& \tableindent{} E-2-11 & 262 & 5.79\% & 29 & 3.71\% & 208 & 79.39\% & 8 & 27.59\% & 54 & 20.61\% & 21 & 72.41\% \\
	E-3 & FilmDatabase & 5938 & 4.71\% & 2227 & 5.99\% & 1435 & 24.17\% & 275 & 12.35\% & 4503 & 75.83\% & 1952 & 87.65\% \\
	& \tableindent{} E-3-1 & 552 & 9.3\% & 100 & 4.49\% & 413 & 74.82\% & 24 & 24.0\% & 139 & 25.18\% & 76 & 76.0\% \\
	& \tableindent{} E-3-2 & 1645 & 27.7\% & 283 & 12.71\% & 386 & 23.47\% & 34 & 12.01\% & 1259 & 76.53\% & 249 & 87.99\% \\
	& \tableindent{} E-3-3 & 2429 & 40.91\% & 1180 & 52.99\% & 334 & 13.75\% & 142 & 12.03\% & 2095 & 86.25\% & 1038 & 87.97\% \\
	& \tableindent{} E-3-4 & 1312 & 22.09\% & 664 & 29.82\% & 302 & 23.02\% & 75 & 11.3\% & 1010 & 76.98\% & 589 & 88.7\% \\
	E-4 & SocialNetwork & 6119 & 4.86\% & 1711 & 4.6\% & 1615 & 26.39\% & 133 & 7.77\% & 4504 & 73.61\% & 1578 & 92.23\% \\
	& \tableindent{} E-4-1 & 717 & 11.72\% & 47 & 2.75\% & 464 & 64.71\% & 9 & 19.15\% & 253 & 35.29\% & 38 & 80.85\% \\
	& \tableindent{} E-4-2 & 864 & 14.12\% & 173 & 10.11\% & 436 & 50.46\% & 8 & 4.62\% & 428 & 49.54\% & 165 & 95.38\% \\
	& \tableindent{} E-4-3 & 3545 & 57.93\% & 1075 & 62.83\% & 359 & 10.13\% & 64 & 5.95\% & 3186 & 89.87\% & 1011 & 94.05\% \\
	& \tableindent{} E-4-4 & 993 & 16.23\% & 416 & 24.31\% & 356 & 35.85\% & 52 & 12.5\% & 637 & 64.15\% & 364 & 87.5\% \\
	E-5 & Volleyball & 6370 & 5.06\% & 3005 & 8.09\% & 1424 & 22.35\% & 393 & 13.08\% & 4946 & 77.65\% & 2612 & 86.92\% \\
	& \tableindent{} E-5-1 & 417 & 6.55\% & 31 & 1.03\% & 391 & 93.76\% & 14 & 45.16\% & 26 & 6.24\% & 17 & 54.84\% \\
	& \tableindent{} E-5-2 & 2686 & 42.17\% & 973 & 32.38\% & 290 & 10.8\% & 71 & 7.3\% & 2396 & 89.2\% & 902 & 92.7\% \\
	& \tableindent{} E-5-3 & 978 & 15.35\% & 506 & 16.84\% & 311 & 31.8\% & 76 & 15.02\% & 667 & 68.2\% & 430 & 84.98\% \\
	& \tableindent{} E-5-4 & 1541 & 24.19\% & 1175 & 39.1\% & 211 & 13.69\% & 160 & 13.62\% & 1330 & 86.31\% & 1015 & 86.38\% \\
	& \tableindent{} E-5-5 & 748 & 11.74\% & 320 & 10.65\% & 221 & 29.55\% & 72 & 22.5\% & 527 & 70.45\% & 248 & 77.5\% \\
	E-6 & Graph & 40362 & 32.04\% & 5844 & 15.73\% & 7235 & 17.93\% & 265 & 4.53\% & 33127 & 82.07\% & 5579 & 95.47\% \\
	& \tableindent{} E-6-1 & 4618 & 11.44\% & 513 & 8.78\% & 1836 & 39.76\% & 52 & 10.14\% & 2782 & 60.24\% & 461 & 89.86\% \\
	& \tableindent{} E-6-2 & 9048 & 22.42\% & 1096 & 18.75\% & 1545 & 17.08\% & 26 & 2.37\% & 7503 & 82.92\% & 1070 & 97.63\% \\
	& \tableindent{} E-6-3 & 9571 & 23.71\% & 1070 & 18.31\% & 1370 & 14.31\% & 25 & 2.34\% & 8201 & 85.69\% & 1045 & 97.66\% \\
	& \tableindent{} E-6-4 & 4760 & 11.79\% & 971 & 16.62\% & 1281 & 26.91\% & 105 & 10.81\% & 3479 & 73.09\% & 866 & 89.19\% \\
	& \tableindent{} E-6-5 & 12365 & 30.64\% & 2194 & 37.54\% & 1203 & 9.73\% & 57 & 2.6\% & 11162 & 90.27\% & 2137 & 97.4\% \\
	E-7 & Murder Mystery-II & 987 & 0.78\% & 215 & 0.58\% & 294 & 29.79\% & 16 & 7.44\% & 693 & 70.21\% & 199 & 92.56\% \\
	& \tableindent{} E-7-1 & 497 & 50.35\% & 105 & 48.84\% & 96 & 19.32\% & 5 & 4.76\% & 401 & 80.68\% & 100 & 95.24\% \\
	& \tableindent{} E-7-2 & 268 & 27.15\% & 78 & 36.28\% & 93 & 34.7\% & 5 & 6.41\% & 175 & 65.3\% & 73 & 93.59\% \\
	& \tableindent{} E-7-3 & 222 & 22.49\% & 32 & 14.88\% & 105 & 47.3\% & 6 & 18.75\% & 117 & 52.7\% & 26 & 81.25\% \\
    \bottomrule
    \end{tabular}
    \caption{Summary of collected data for exercises E-1 -- E-7. Percentages in the all-attempts columns for individual formulas are relative to their exercise's totals; percentages in the equivalent and non-equivalent columns are relative to all attempts for the corresponding formula (analogously for distinct attempts).}\label{table:exercise-summary-1}
\end{table*}

\begin{table*}
    \footnotesize
    \centering
    \begin{tabular}{l | l || rrrr | rrrr | rrrr}
        \toprule 
        Id & Exercise \& formulas & \multicolumn{4}{c|}{all attempts} & \multicolumn{4}{c|}{equivalent attempts} & \multicolumn{4}{c}{non-equivalent attempts}\\
        && \defaultColumnHeaders & \defaultColumnHeaders & \defaultColumnHeaders \\
        \midrule
         & All exercises combined & 125970 & 100.0\% & 37159 & 100.0\% & 27655 & 21.95\% & 3019 & 8.12\% & 98315 & 78.05\% & 34140 & 91.88\% \\
        \midrule
        	E-8 & Strings & 640 & 0.51\% & 289 & 0.78\% & 87 & 13.59\% & 41 & 14.19\% & 553 & 86.41\% & 248 & 85.81\% \\
	& \tableindent{} E-8-1 & 160 & 25.0\% & 74 & 25.61\% & 41 & 25.63\% & 15 & 20.27\% & 119 & 74.38\% & 59 & 79.73\% \\
	& \tableindent{} E-8-2 & 175 & 27.34\% & 102 & 35.29\% & 32 & 18.29\% & 18 & 17.65\% & 143 & 81.71\% & 84 & 82.35\% \\
	& \tableindent{} E-8-3 & 305 & 47.66\% & 113 & 39.1\% & 14 & 4.59\% & 8 & 7.08\% & 291 & 95.41\% & 105 & 92.92\% \\
	E-9 & Theorems & 488 & 0.39\% & 114 & 0.31\% & 176 & 36.07\% & 15 & 13.16\% & 312 & 63.93\% & 99 & 86.84\% \\
	& \tableindent{} E-9-1 & 43 & 8.81\% & 5 & 4.39\% & 39 & 90.7\% & 2 & 40.0\% & 4 & 9.3\% & 3 & 60.0\% \\
	& \tableindent{} E-9-2 & 146 & 29.92\% & 38 & 33.33\% & 50 & 34.25\% & 7 & 18.42\% & 96 & 65.75\% & 31 & 81.58\% \\
	& \tableindent{} E-9-3 & 193 & 39.55\% & 40 & 35.09\% & 47 & 24.35\% & 3 & 7.5\% & 146 & 75.65\% & 37 & 92.5\% \\
	& \tableindent{} E-9-4 & 106 & 21.72\% & 31 & 27.19\% & 40 & 37.74\% & 3 & 9.68\% & 66 & 62.26\% & 28 & 90.32\% \\
	E-10 & Millisoft & 28095 & 22.3\% & 9336 & 25.12\% & 5220 & 18.58\% & 774 & 8.29\% & 22875 & 81.42\% & 8562 & 91.71\% \\
	& \tableindent{} E-10-1 & 1888 & 6.72\% & 126 & 1.35\% & 1319 & 69.86\% & 19 & 15.08\% & 569 & 30.14\% & 107 & 84.92\% \\
	& \tableindent{} E-10-2 & 2694 & 9.59\% & 483 & 5.17\% & 1206 & 44.77\% & 61 & 12.63\% & 1488 & 55.23\% & 422 & 87.37\% \\
	& \tableindent{} E-10-3 & 7170 & 25.52\% & 2949 & 31.59\% & 933 & 13.01\% & 311 & 10.55\% & 6237 & 86.99\% & 2638 & 89.45\% \\
	& \tableindent{} E-10-4 & 3906 & 13.9\% & 981 & 10.51\% & 990 & 25.35\% & 53 & 5.4\% & 2916 & 74.65\% & 928 & 94.6\% \\
	& \tableindent{} E-10-5 & 12437 & 44.27\% & 4797 & 51.38\% & 772 & 6.21\% & 330 & 6.88\% & 11665 & 93.79\% & 4467 & 93.12\% \\
	E-11 & Webpages & 11543 & 9.16\% & 3771 & 10.15\% & 2784 & 24.12\% & 170 & 4.51\% & 8759 & 75.88\% & 3601 & 95.49\% \\
	& \tableindent{} E-11-1 & 4604 & 39.89\% & 1261 & 33.44\% & 752 & 16.33\% & 40 & 3.17\% & 3852 & 83.67\% & 1221 & 96.83\% \\
	& \tableindent{} E-11-2 & 2148 & 18.61\% & 845 & 22.41\% & 691 & 32.17\% & 43 & 5.09\% & 1457 & 67.83\% & 802 & 94.91\% \\
	& \tableindent{} E-11-3 & 2945 & 25.51\% & 1198 & 31.77\% & 659 & 22.38\% & 65 & 5.43\% & 2286 & 77.62\% & 1133 & 94.57\% \\
	& \tableindent{} E-11-4 & 1846 & 15.99\% & 467 & 12.38\% & 682 & 36.94\% & 22 & 4.71\% & 1164 & 63.06\% & 445 & 95.29\% \\
	E-12 & Recipe & 11200 & 8.89\% & 3778 & 10.17\% & 3388 & 30.25\% & 292 & 7.73\% & 7812 & 69.75\% & 3486 & 92.27\% \\
	& \tableindent{} E-12-1 & 1756 & 15.68\% & 165 & 4.37\% & 1024 & 58.31\% & 19 & 11.52\% & 732 & 41.69\% & 146 & 88.48\% \\
	& \tableindent{} E-12-2 & 2431 & 21.71\% & 740 & 19.59\% & 947 & 38.96\% & 71 & 9.59\% & 1484 & 61.04\% & 669 & 90.41\% \\
	& \tableindent{} E-12-3 & 3819 & 34.1\% & 1735 & 45.92\% & 803 & 21.03\% & 150 & 8.65\% & 3016 & 78.97\% & 1585 & 91.35\% \\
	& \tableindent{} E-12-4 & 3194 & 28.52\% & 1138 & 30.12\% & 614 & 19.22\% & 52 & 4.57\% & 2580 & 80.78\% & 1086 & 95.43\% \\
	E-13 & BookCollection & 2040 & 1.62\% & 1230 & 3.31\% & 345 & 16.91\% & 112 & 9.11\% & 1695 & 83.09\% & 1118 & 90.89\% \\
	& \tableindent{} E-13-1 & 267 & 13.09\% & 126 & 10.24\% & 118 & 44.19\% & 24 & 19.05\% & 149 & 55.81\% & 102 & 80.95\% \\
	& \tableindent{} E-13-2 & 582 & 28.53\% & 355 & 28.86\% & 92 & 15.81\% & 27 & 7.61\% & 490 & 84.19\% & 328 & 92.39\% \\
	& \tableindent{} E-13-3 & 582 & 28.53\% & 365 & 29.67\% & 86 & 14.78\% & 40 & 10.96\% & 496 & 85.22\% & 325 & 89.04\% \\
	& \tableindent{} E-13-4 & 609 & 29.85\% & 384 & 31.22\% & 49 & 8.05\% & 21 & 5.47\% & 560 & 91.95\% & 363 & 94.53\% \\
	E-14 & FacultyConference & 917 & 0.73\% & 622 & 1.67\% & 77 & 8.4\% & 36 & 5.79\% & 840 & 91.6\% & 586 & 94.21\% \\
	& \tableindent{} E-14-1 & 607 & 66.19\% & 379 & 60.93\% & 44 & 7.25\% & 15 & 3.96\% & 563 & 92.75\% & 364 & 96.04\% \\
	& \tableindent{} E-14-2 & 235 & 25.63\% & 186 & 29.9\% & 16 & 6.81\% & 12 & 6.45\% & 219 & 93.19\% & 174 & 93.55\% \\
	& \tableindent{} E-14-3 & 75 & 8.18\% & 57 & 9.16\% & 17 & 22.67\% & 9 & 15.79\% & 58 & 77.33\% & 48 & 84.21\% \\
    \bottomrule
    \end{tabular}
    \caption{Summary of collected data for exercises E-8 -- E-14. Percentages in the all-attempts columns for individual formulas are relative to their exercise's totals; percentages in the equivalent and non-equivalent columns are relative to all attempts for the corresponding formula (analogously for distinct attempts).}\label{table:exercise-summary-2}
\end{table*}
 
\clearpage
\onecolumn
\section{Detailed Evaluation Data}\label{section:appendix-evaluation-tables}
The detailed evaluation data for all exercises can be found in Tables \ref{table:exercise-online-shop},
\ref{table:exercise-dreadsbury},
\ref{table:exercise-dreadsbury-pt2},
\ref{table:exercise-film-database},
\ref{table:exercise-social-network},
\ref{table:exercise-volleyball},
\ref{table:exercise-graph-ii},
\ref{table:exercise-dreadsbury-ii},
\ref{table:exercise-strings},
\ref{table:exercise-theorems},
\ref{table:exercise-millisoft},
\ref{table:exercise-web-pages},
\ref{table:exercise-recipe},
\ref{table:exercise-book-collection}, and
\ref{table:exercise-faculty-conference}.

\renewcommand{\fiiill}{&&&&&&&&&& &&&&}
\begin{table*}
    \footnotesize
    \centering
    \begin{adjustbox}{angle=90}

    \end{adjustbox}
    \caption{Evaluation for exercise E-1 Online Shop.}\label{table:exercise-online-shop}
\end{table*}

\renewcommand{\fiiill}{&&&&&&&&&&&&&&&&&& &&&&}
\begin{table*}
    \scriptsize
    \centering
    \begin{adjustbox}{angle=90}
    %
    \end{adjustbox}
    \caption{Evaluation for exercise E-2 Murder mystery, formulas E-2-1 -- E-2-5.}\label{table:exercise-dreadsbury}
\end{table*}
\renewcommand{\fiiill}{&&&&&&&&&&&&&&&&&& &&&&}
\begin{table*}
    \scriptsize
    \centering
    \begin{adjustbox}{angle=90}
    %
    \end{adjustbox}
    \caption{Evaluation for exercise E-2 Murder mystery, formulas E-2-6 -- E-2-11.}\label{table:exercise-dreadsbury-pt2}
\end{table*}

\renewcommand{\fiiill}{&&&&&&&&&&&&&& &&&&}
\begin{table*}
    \scriptsize
    \centering
    \begin{adjustbox}{angle=90}
    %
    \end{adjustbox}
    \caption{Evaluation for exercise E-3 Film Database.}\label{table:exercise-film-database}
\end{table*}

\renewcommand{\fiiill}{&&&&&&&&&&&&&& &&&&}
\begin{table*}
    \scriptsize
    \centering
    \begin{adjustbox}{angle=90}
    %
    \end{adjustbox}
    \caption{Evaluation for exercise E-4 Social Network.}\label{table:exercise-social-network}
\end{table*}

\renewcommand{\fiiill}{&&&&&&&&&&&&&&&&&& &&&&}

\begin{table*}
    \scriptsize
    \centering
\scalebox{0.97}{    \begin{adjustbox}{angle=90}
    %
    \end{adjustbox}}
    \caption{Evaluation for exercise E-5 Volleyball.}\label{table:exercise-volleyball}
\end{table*}

\renewcommand{\fiiill}{&&&&&&&&&&&&&&&&&& &&&&}
\begin{table*}
    \tiny
    \centering
    \begin{adjustbox}{angle=90}
    %
    \end{adjustbox}
    \caption{Evaluation for exercise E-6 Graph.}\label{table:exercise-graph-ii}
\end{table*}

\renewcommand{\fiiill}{&&&&&&&&&&&&&&&&}
\begin{table*}
    \footnotesize
    \centering
    \begin{adjustbox}{angle=90}
    %
    \end{adjustbox}
    \caption{Evaluation for exercise E-7 Murder mystery II.}\label{table:exercise-dreadsbury-ii}
\end{table*}

\renewcommand{\fiiill}{&&&&&&&&&&&&&&&&}
\begin{table*}
    \footnotesize
    \centering
    \begin{adjustbox}{angle=90}
    %
    \end{adjustbox}
    \caption{Evaluation for exercise E-8 Strings.}\label{table:exercise-strings}
\end{table*}

\renewcommand{\fiiill}{&&&&&&&&&&&&&&&&&&&&}
\begin{table*}
    \scriptsize
    \centering
    \begin{adjustbox}{angle=90}
    %
    \end{adjustbox}
    \caption{Evaluation of exercise E-9 Theorems.}\label{table:exercise-theorems}
\end{table*}

\renewcommand{\fiiill}{&&&&&&&&&&&&&&&&&&&& &&&&}
\begin{table*}
    \tiny
    \centering
    \begin{adjustbox}{angle=90}
    %
\end{adjustbox}
    \caption{Evaluation for exercise E-10 Millisoft.}\label{table:exercise-millisoft}
\end{table*}

\renewcommand{\fiiill}{&&&&&&&&&&&&&&&&&&&&}
\begin{table*}
    \scriptsize
    \centering
    \begin{adjustbox}{angle=90}
    %
    \end{adjustbox}
    \caption{Evaluation of exercise E-11 Web pages.}\label{table:exercise-web-pages}
\end{table*}

\renewcommand{\fiiill}{&&&&&&&&&&&&&&&&&&&&}
\begin{table*}
    \scriptsize
    \centering
    \begin{adjustbox}{angle=90}
    %
    \end{adjustbox}
    \caption{Evaluation of exercise E-12 Recipe.}\label{table:exercise-recipe}
\end{table*}

\renewcommand{\fiiill}{&&&&&&&&&&&&&&&&&&&&}
\begin{table*}
    \scriptsize
    \centering
    \begin{adjustbox}{angle=90}
    %
    \end{adjustbox}
    \caption{Evaluation of exercise E-13 Book Collection.}\label{table:exercise-book-collection}
\end{table*}

\renewcommand{\fiiill}{&&&&&&&&&&&&&&&&}
\begin{table*}
    \footnotesize
    \centering
    \begin{adjustbox}{angle=90}
    %
    \end{adjustbox}
    \caption{Evaluation of exercise E-14 Faculty Conference.}\label{table:exercise-faculty-conference}
\end{table*}
 
\fi
\end{document}